\documentclass[journal]{IEEEtran}

\usepackage{cite}
\usepackage{amsmath,amssymb,amsfonts}
\usepackage{algorithmic}
\usepackage{graphicx}
\usepackage{algorithm,algorithmic}
\usepackage{hyperref}
\hypersetup{hidelinks=true}
\usepackage{textcomp}

\def\BibTeX{{\rm B\kern-.05em{\sc i\kern-.025em b}\kern-.08em
		T\kern-.1667em\lower.7ex\hbox{E}\kern-.125emX}}

\usepackage{graphics} 
\usepackage{breqn}
\usepackage{color}
\usepackage{soul}
\usepackage{lscape}
\usepackage{bbm}
\usepackage{accents}

\usepackage{tikz}
\usetikzlibrary{positioning, calc,shadows}
\usepackage{xcolor}
\usepackage[normalem]{ulem}

\usepackage{dsfont}

\newtheorem{remark}{Remark}

\newtheorem{definition}{Definition}
\newtheorem{assumption}{Assumption}
\newtheorem{theorem}{Theorem}
\newtheorem{lemma}{Lemma}

\newcommand{\be}{\begin{equation}}
\newcommand{\ee}{\end{equation}}
\newcommand{\qed}{\hfill $\blacksquare$}

\newcommand{\calZ}{{\cal Z}}

\newcommand{\eps}{\epsilon}

\newcommand{\R}{\mathbb{R}}
\newcommand{\calU}{\mathcal{U}}
\newcommand{\bu}{\boldsymbol{u}}
\def\bmat{\begin{bmatrix}}
\def\emat{\end{bmatrix}}

\newcommand{\utilde}{\underaccent{\tilde}}

\begin{document}

\title{Scenario Approach with Post-Design Certification of User-Specified Properties 
\thanks{A preliminary version of this work appeared as a conference paper in \cite{CareCampiGaratti2025}. The present paper has been substantially revised and extended. In particular, besides extensive rewriting and several additions introduced throughout the manuscript, all the material in Sections \ref{sec:strictercertification} and \ref{sec:cost-dist}, as well as the numerical example in Section \ref{sec:general_robust_poles}, is entirely new. \\
The authors gratefully acknowledge Dr. Kristina Frizyuk for her assistance with the simulations. \\ 
Paper supported by the PRIN 2022 project 2022RRNAEX ``The Scenario Approach for Control and Non-Convex Design'' (CUP: D53D23001440006), funded by the NextGeneration EU program (Mission 4, Component 2, Investment 1.1), by the PRIN PNRR project  P2022NB77E “A data-driven cooperative framework for the management of distributed energy and water resources” (CUP: D53D23016100001), funded by the NextGeneration EU program (Mission 4, Component 2, Investment 1.1), and by the FAIR (Future Artificial Intelligence Research) project, funded by the NextGenerationEU program within the PNRR-PE-AI scheme (M4C2, Investment 1.3).}
}

\author{Algo Car{\`e}, \IEEEmembership{Member, IEEE}, Marco C. Campi, \IEEEmembership{Fellow, IEEE}, Simone Garatti, \IEEEmembership{Member, IEEE}
\thanks{A. Car\`e and M.C. Campi are with the Department of Information Engineering - University of Brescia, via Branze 38, 25123 Brescia, Italia (e-mail: [algo.care,marco.campi]@unibs.it).}
\thanks{S. Garatti is with the Dipartimento di Elettronica, Informazione e Bioingegneria - Politecnico di Milano, piazza Leonardo da Vinci 32, 20133 Milano, Italia (e-mail: simone.garatti@polimi.it).}
}

\maketitle

\begin{abstract}
The scenario approach is an established data-driven design framework that comes equipped with a powerful theory linking design \emph{complexity} to \emph{generalization} properties. In this approach, data are simultaneously used both for design and for certifying the design’s reliability, without resorting to a separate test dataset. This paper takes a step further by guaranteeing additional properties, useful in post-design usage but not considered during the design phase. To this end, we introduce a two-level framework of appropriateness: \emph{baseline appropriateness}, which guides the design process, and \emph{post-design appropriateness}, which serves as a criterion for a posteriori evaluation. We provide distribution-free upper bounds on the risk of failing to meet the post-design appropriateness; these bounds are computable without using any additional test data. Under additional assumptions, lower bounds are also derived. As part of an effort to demonstrate the usefulness of the proposed methodology, the paper presents two practical examples in $H_2$ and pole-placement problems. Moreover, a method is provided to infer comprehensive distributional knowledge of relevant performance indexes from the available dataset.  
\end{abstract}

\begin{IEEEkeywords}
Scenario approach, data-driven control design, post-design verification
\end{IEEEkeywords}

\section{Introduction}
\IEEEPARstart{T}{he} scenario approach is a well-established methodology in systems and control for data-driven design with probabilistic guarantees. The distinctive feature of the scenario approach  is that data-driven designs are certified without the need for separate test datasets.

Since its introduction in the mid-2000s, \cite{CalCam:05,CalCam:06}, the scenario approach has evolved into a versatile framework supporting robust optimization, \cite{CamGa:08,CarGarCam2015,GarCarCam2023}, optimization with constraint relaxation, \cite{GarCa2019,garatti2024noncvx}, risk-averse formulations using Conditional Value at Risk (CVaR), \cite{ramponi2018expected,arici2021theory},\cite{garatti2024noncvx}, as well as broader decision-making strategies, \cite{GarCa2019,garatti2024noncvx,campi2023compression}; the reader is referred to the book  \cite{campi2018introduction} and the review article \cite{ARC2021} for a general background.

A large body of research, \cite{Alamoetal2009,SchiFaMo:13,SchiFaFrMo2014,GrammaticoEtal:14,Alamoetal2015,MarGouLy:14,MarPraLyl:14 ,ZhangEtal:2015,PeymanAxelby2015,Darivianakis2017,ramponi2018consistency,shang2019posteriori,FALSONE2019,assif2020scenario,romao2022exact,rocchetta2023survey,wangJungers2023objective,jalota2023online,Falsone2023,wang2024data,wang2024scenario,pantazis2024priori,wangMarg2025distributed}, has contributed to the theoretical development of the scenario approach, addressing the following central question: given a scenario design based on a finite sample of observations, what is the probability that it will generalize and remain appropriate for previously unseen situations? For the scope of the present paper, two contributions are particularly relevant. The first is paper \cite{GarCa2019}, where the framework of {\em consistent decision-making} was introduced, formalizing the abstract notion of {\em appropriateness} and providing tight upper and lower bounds on the probability of inappropriateness under non-degeneracy assumptions. These bounds were further studied in an asymptotic setting in \cite{campi2023compression}. The second contribution is \cite{garatti2024noncvx}, which extended the framework by {\em removing the non-degeneracy assumption} and proved that the upper bounds remain valid in this more general setup.  

This paper further advances the scenario theory by considering settings in which a design is carried out with respect to a {\em baseline appropriateness criterion}, while additional properties of interest are assessed after the design has been determined. We term these additional properties {\em post-design appropriateness conditions}. Within this extended framework, we establish upper bounds on the probability of violating post-design appropriateness, and, under additional assumptions, derive complementary lower bounds, thereby enclosing the post-design risk within certified intervals. Importantly, all bounds are computable without resorting to any additional data besides those used during the design. These guarantees provide informative and practically relevant assessments of additional or stricter performance requirements. 
Typical situations where these new results can be applied include: 
\begin{itemize}
\item[(i)] the quantification of the probability of achieving enhanced performances beyond a guaranteed baseline. For example, the assessment of the probability that a controller meets a stricter performance requirement than those enforced at design time; 
\item[(ii)] certain design goals lead to mathematical problems that are difficult to handle. For instance, their formalization may involve non-convex or computationally complex optimization procedures. In such cases, one often resorts to surrogate or heuristic design criteria to obtain a solution, and the theory presented here can then be used to verify whether the original goals of interest are met by the solution obtained using the simplified scheme. 
\end{itemize}
In the second part of this paper, we present two examples illustrating points (i) and (ii). Furthermore, when a cost quantifies the quality of a design in relation to an uncertain environment, we show that a suitable post-design criterion enables the assessment of the full cost distribution, thus providing a comprehensive evaluation of the scenario solution. 

The remainder of the paper is organized as follows. Sections \ref{technical section}A–C develop upper bounds on post-design risk, while Section \ref{sec:strictercertification} establishes lower bounds under additional conditions. Applications to two examples in $H_2$ control and pole-placement are presented in Section \ref{sec:examples}, while Section \ref{sec:cost-dist} deals with the evaluation of the distribution of costs associated with the scenario solution. Conclusions are drawn in Section \ref{sec:conclusions}. 

\section{Problem statement and technical results}
\label{technical section}

\subsection{A scenario decision framework with two notions of appropriateness} \label{sec:frame2appr}

We first recall how the data-to-decision process is modeled within the scenario approach of \cite{GarCa2019,garatti2024noncvx}. A
list $\delta_1,\ldots,\delta_N$ of observations of an  uncertain variable $\delta$ is modeled as an independent and identically distributed (i.i.d.) sample
from a probability space $(\Delta,{\cal D}, \mathbb{P})$. Each observed $\delta_i$ is called a ``scenario'', and, borrowing a machine learning terminology, the list of scenarios $\delta_1,\ldots,\delta_N$ is sometimes called a \emph{training set}. The probability $\mathbb{P}$ is typically unknown to the user, so the framework is often called {\em agnostic}, or {\em distribution-free}.

Let $\calZ$ be a generic set, which is interpreted as the domain from which a decision
$z$ has to be chosen. From an abstract perspective, a data-driven decision scheme corresponds to a family of maps from lists of scenarios to decisions, namely
$$
M_m : \Delta^m \rightarrow \calZ, \quad m = 0, 1, 2, \ldots.
$$
When $m = 0$, the list $\delta_1,\ldots,\delta_m$ is meant to be the empty list and $M_0$ returns the decision that is made when no observations are available. We will often denote the decision returned by $M_m(\delta_1,\ldots,\delta_m)$ as $z^\ast_m$. 

A further ingredient of the standard scenario approach is a criterion of appropriateness, based on which a given decision $z \in \calZ$ is deemed appropriate or inappropriate for a specific scenario $\delta \in \Delta$. The main objective of the theory of the scenario approach is to certify the level of appropriateness of the decision $M_N(\delta_1,\ldots,\delta_N)$, where $N$ is the actual number of scenarios used to design. Importantly, this certification must be based on $\delta_1,\ldots,\delta_N$ only, i.e., without the aid of additional data. This goal is pursued in \cite{GarCa2019,garatti2024noncvx} under the requirement that the maps $M_m$ satisfy certain consistency properties, which we recall in Assumption~\ref{ass:Consistency_of_appr1} below, linking the maps $M_m$ with the notion of appropriateness. 

In this paper, we consider two distinct notions of appropriateness:
\begin{itemize}
	\item \emph{Baseline appropriateness}: this is the ``standard'' notion of appropriateness, relative to which the decision map is required to satisfy the consistency condition in Assumption~\ref{ass:Consistency_of_appr1} below;
	\item \emph{Post-design appropriateness}: this is an additional, desirable notion of appropriateness, which complements baseline appropriateness in characterizing the quality of the design for a given scenario; no assumption is made that $M_m$ is consistent with respect to this post-design appropriateness.
\end{itemize}

The baseline appropriateness plays the same role as the standard notion of appropriateness of \cite{GarCa2019,garatti2024noncvx} and it is termed here ``baseline'' to contrast it with the new, post-design appropriateness. These two notions of appropriateness represent abstractions of concrete criteria encountered in applications. For example, in control design, the baseline appropriateness may correspond to the requirement that the controller stabilizes a plant, whereas the post-design appropriateness may refer to the achievement of stability with a pre-specified margin. 

For future use, we introduce notation to denote the set of decisions that are baseline and post-design appropriate for a given scenario $\delta$. 
\medskip 
\begin{definition}[sets of baseline and post-design  appropriate decisions]
The set of decisions that are baseline appropriate for a given $\delta$ is denoted by $\calZ'_\delta$, while the set of post-design appropriate decisions is denoted by $\calZ''_\delta$.\footnote{For a given $z$, the measurability of the sets $\{\delta\in\Delta: z\in \calZ'_\delta\}$ and $\{\delta\in\Delta: z \in \calZ''_\delta\}$, as well as that of similar sets defined throughout the paper, is tacitly assumed without being stated explicitly each time.} 
\hfill$\star$
\end{definition} 
\medskip 
The requirement of baseline consistency is specified in the following assumption.
\medskip
\begin{assumption}\label{ass:Consistency_of_appr1}
The maps $M_m$ satisfy the consistency requirements as per Property 1 in \cite{garatti2024noncvx} with respect to the baseline appropriateness. That is, for every integers $m\geq 0$ and $n>0$, and every $\delta_1, \ldots, \allowbreak \delta_{m}, \delta_{m+1}, \ldots, \delta_{m+n}$, the following three conditions, called \emph{permutation invariance}, \emph{confirmation under appropriateness} and \emph{responsiveness to inappropriateness}, are satisfied:  
\begin{itemize}
	\item[-] given any permutation $(i_1,\ldots,i_{m})$ of $(1,\ldots,m)$, it holds that $M_m(\delta_1,\ldots,\delta_{m}) = M_{m}(\delta_{i_1},\ldots,\delta_{i_{m}})$; 
	\item[-]  if	$M_{m}(\delta_1,\ldots,\delta_{m}) \in \calZ'_{\delta_{m+i}}$, $\forall i \in\{1,\ldots,n\}$, then $M_{m+n}(\delta_1,\ldots,\delta_{m+n}) = 	M_{m}(\delta_1,\ldots,\delta_{m})$;
	\item[-] if $\exists \, i\in\{1,\ldots,n\}: \;\; M_{m}(\delta_1,\ldots,\delta_{m}) \notin \calZ'_{\delta_{m+i}}$, then $M_{m+n}(\delta_1,\ldots,\delta_{m+n}) \neq M_{m}(\delta_1,\ldots,\delta_{m})$. 
\end{itemize}
\hfill$\star$
\end{assumption}
Given its importance, we reiterate that no assumption is made linking $M_m$ to post-design appropriateness.
\subsection{The risk of inappropriateness - review and new goal} \label{sec:riskreview}

The notion of level of appropriateness for both baseline and post-design criteria is formalized in the following definition. 
\medskip
\begin{definition}[baseline and post-design risk] \label{def:risks}
For a given decision $z\in\calZ$, the risk of baseline inappropriateness, or {\em baseline risk}, is defined as the probability that a new $\delta$ is baseline inappropriate for that decision, i.e., 
$$
R'(z):=\mathbb{P}\{\delta\in\Delta:\, z \notin \calZ'_\delta\}; 
$$
similarly, the risk of post-design inappropriateness, or {\em post-design risk}, is the probability that a new $\delta$ is post-design inappropriate for the decision, i.e.,
$$
R''(z):=\mathbb{P}\{\delta\in\Delta:\, z \notin \calZ''_\delta\}.
$$ \hfill$\star$
\end{definition}
While the baseline risk can be dealt with through the theory of \cite{GarCa2019,garatti2024noncvx}, the main contribution of the present paper is the certification of $R''(z^\ast_N)$, the post-design risk of the decision $z^\ast_N$ designed through $M_N$. Although $R''(z^\ast_N)$ cannot be directly computed because $\mathbb{P}$ is unknown, the goal is to demonstrate that certain statistics of the data can be used to bound $R''(z^\ast_N)$ with high confidence $1\!-\!\beta$ with respect to the variability of $\delta_1,\ldots,\delta_N$. The statistics that serve to certify $R''(z^\ast_N)$  are constructed from the same dataset used for design, avoiding any loss of data. 

Towards the goal of certifying $R''(z^\ast_N)$, we find it advisable to revisit the results proven in \cite{GarCa2019} and \cite{garatti2024noncvx} concerning $R'(z^\ast_N)$ as this will allow us to introduce some notions that will be instrumental in our analysis. The first relevant definition, taken from \cite{garatti2024noncvx}, is that of support list and complexity. 
\medskip
\begin{definition}[baseline support list and baseline complexity]\label{def:complexity1}
Given a list of scenarios $\delta_1,\ldots,\delta_m$, 
a baseline support list of $M_m$ is a sub-list $\delta_{i_1},\ldots,\delta_{i_k}$, with $i_1 < i_2 < \cdots < i_k$, such that:
(a) $M_m(\delta_1,\ldots,\delta_m) = M_k (\delta_{i_1},\ldots,\delta_{i_k})$; (b)   $\delta_{i_1},\ldots,\delta_{i_k}$ is irreducible, that is, no element can be further removed from
$\delta_{i_1},\ldots,\delta_{i_k}$ while leaving the decision unchanged. \\
For a given $\delta_1,\ldots,\delta_m$, there can be more than one selection of the indexes
$i_1,i_2,\ldots,i_k$, possibly with different cardinality $k$, yielding a baseline support list. The minimal cardinality among all baseline support lists is called the {\em baseline complexity} and is denoted by ${s}^{b,\ast}_m$. \hfill$\star$
\end{definition}
\medskip 
The notion of complexity plays a key role in the analysis of the baseline risk developed in \cite{garatti2024noncvx}. In fact, Theorem~4 in that paper states that $R'(z^\ast_N)$ can be probabilistically bounded according to relation 
\begin{equation} \label{appr1_upper_bound_with_eps}
\mathbb{P}^N\{R'(z^\ast_N) \leq \epsilon({s}^{b,\ast}_N)\} \geq 1-\beta,
\end{equation}
where $\beta \in (0,1)$ is a user-chosen \emph{confidence parameter}, and the function $\epsilon(\cdot)$ is defined as follows: $\epsilon(N) = 1$ and, for $k = 0, 1,\ldots, N - 1$, $\epsilon(k) := 1 - t(k)$ where $t(k)$ is the only solution in the interval $(0, 1)$ of the equation in the $t$ variable
\begin{equation} \label{eq:ruleps}
\frac{\beta}{N}\sum_{m=k}^{N-1} {m\choose k} t^{m-k}-{N\choose k} t^{N-k} = 0.\footnote{Equtation \eqref{eq:ruleps} can be efficiently solved numerically by bisection. See Appendix B.1 of \cite{campi2023compression} for a ready-to-use MATLAB code that directly returns $\epsilon(k)$.}
\end{equation}
In view of \eqref{appr1_upper_bound_with_eps}, the baseline complexity ${s}^{b,\ast}_N$, which is a statistic of the data, can be used to obtain a certified upper bound on $R'(z^\ast_N)$. Furthermore, Theorem 2 in \cite{GarCa2019} complements this result by providing, under additional conditions, a lower bound for $R'(z^\ast_N)$. 

These were known results. The goal of this paper is to assess the risk of $z^\ast_N$ with respect to {\em post-design appropriateness}. Bounding $R''(z^\ast_N)$ is nontrivial because, in general, the maps $M_m$ do  not satisfy the consistency requirements as per Property 1 in \cite{garatti2024noncvx} in relation to post-design appropriateness: the results in \cite{garatti2024noncvx} and \cite{GarCa2019} cannot be applied and a new theoretical development is required to pursue this goal. 

\subsection{Novel results: post-design risk certification} \label{sec:novelresults}

\emph{En route} to certifying the post-design risk $R''(z^\ast_N)$, our first step consists in defining new, instrumental decision maps from lists of scenarios to an augmented decision space $\calZ^+:=\calZ\times\mathbb{N}$ ($\mathbb{N}$ denotes the set of non-negative integers), along with an instrumental notion of appropriateness for these maps. 
\medskip 
\begin{definition}[instrumental decision map]
For any $m=0,1,\ldots$ and any list $\delta_1,\ldots,\delta_m$ of scenarios, the {\em instrumental decision map} is defined as $M_m^+(\delta_1,\ldots,\delta_m):=(z_m^\ast,c^\ast_m)$, where
$z_m^\ast=M_m(\delta_1,\ldots,\delta_m)$ and $c_m^\ast=\#\big\{i \in \{1,\ldots,m\} \text{ such that }  z_m^\ast \notin \calZ''_{\delta_i}\big\}$ (the symbol ``\#'' denotes cardinality).  \hfill$\star$
\end{definition}
\medskip 
\begin{definition}[instrumentally appropriate decisions]
\label{definition - instrumentally appropriate}
For every $\delta$, the set of the {\em instrumentally appropriate decisions} for $\delta$ is defined as $\calZ^+_\delta:=(\calZ'_\delta\times \mathbb{N}) \cap (\calZ''_\delta\times \mathbb{N}).$
\hfill$\star$
\end{definition}
\medskip 
According to Definition \ref{definition - instrumentally appropriate}, an instrumental decision $(z,\ell)$ is instrumentally appropriate if and only if its first element, $z$, is both baseline and post-design appropriate. In this definition, the second element, $\ell$, does not play any role. Nevertheless, as we will see later, this second element is important to develop the theoretical framework that supports the establishment of the bounds. 

The {\em instrumental risk} of an instrumental decision is defined as follows. 
\medskip 
\begin{definition}[instrumental risk]
	For an instrumental decision $(z,\ell)\in\calZ\times\mathbb{N}$, the risk of instrumental appropriateness, or {\em instrumental risk}, is defined as the probability that a new $\delta$ is instrumentally inappropriate for that decision, i.e.,
	$$R^+(z,\ell):=\mathbb{P}\{\delta\in\Delta\,:\, (z,\ell) \notin \calZ^+_\delta\}.$$ 
	\hfill$\star$
\end{definition}	
Since, as already noticed, the element $\ell$ does not play any role in the definition of instrumental appropriateness, $R^+(z,\ell)$ only depends on $z$, and it coincides with 
	\begin{equation} \label{eq:R+(z)}
		R^+(z):=\mathbb{P}\{\delta\in\Delta\,:\, z \notin \calZ'_\delta\cap\calZ''_\delta \}.  
	\end{equation} 

The following lemma is key in the development of the theory. 
\medskip 
\begin{lemma}\label{lemma:M+consistent}
The maps $M_m^+$ satisfy the consistency requirements with respect to the {\em instrumental appropriateness} (i.e.,  the same conditions in Assumption \ref{ass:Consistency_of_appr1} hold with  $M_m$ replaced by $M_m^+$ and $\calZ'_{\delta_{m+i}}$ replaced by $\calZ^+_{\delta_{m+i}}$).
\end{lemma}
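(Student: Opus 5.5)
The plan is to verify the three conditions of Assumption~\ref{ass:Consistency_of_appr1} for $M_m^+$ one at a time. In each case I write $M_m^+(\delta_1,\ldots,\delta_m)=(z^\ast_m,c^\ast_m)$ and reduce the claim to the corresponding property of $M_m$ together with the definition of $c^\ast_m$. Throughout, I use that $(z,\ell)\in\calZ^+_\delta$ if and only if $z\in\calZ'_\delta\cap\calZ''_\delta$.

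\emph{Permutation invariance.} By Assumption~\ref{ass:Consistency_of_appr1}, $z^\ast_m$ is unchanged under any permutation of $\delta_1,\ldots,\delta_m$. The count $c^\ast_m$ is the number of indices $i$ with $z^\ast_m\notin\calZ''_{\delta_i}$. For a fixed decision, this count depends only on the multiset $\{\delta_1,\ldots,\delta_m\}$, so it is also permutation invariant. Hence $M_m^+$ is permutation invariant.

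\emph{Confirmation under appropriateness.} Suppose $(z^\ast_m,c^\ast_m)\in\calZ^+_{\delta_{m+i}}$ for all $i=1,\ldots,n$. Then $z^\ast_m\in\calZ'_{\delta_{m+i}}$ for all $i$, so baseline confirmation gives $z^\ast_{m+n}=z^\ast_m$. We also have $z^\ast_m\in\calZ''_{\delta_{m+i}}$ for all $i$. Therefore, when computing $c^\ast_{m+n}$ with the same decision $z^\ast_{m+n}=z^\ast_m$, none of the new scenarios adds to the count, and $c^\ast_{m+n}=c^\ast_m$. This gives $M^+_{m+n}=M^+_m$.

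\emph{Responsiveness to inappropriateness.} This is the only step that needs a case split, and it is where the counter $c^\ast_m$ is essential. Suppose some $i$ has $(z^\ast_m,c^\ast_m)\notin\calZ^+_{\delta_{m+i}}$. There are two cases.
\begin{itemize}
\item[(a)] If $z^\ast_m\notin\calZ'_{\delta_{m+j}}$ for some $j$, baseline responsiveness gives $z^\ast_{m+n}\neq z^\ast_m$. The pairs therefore differ in their first component.
\item[(b)] Otherwise $z^\ast_m\in\calZ'_{\delta_{m+j}}$ for every $j=1,\ldots,n$. Baseline confirmation then forces $z^\ast_{m+n}=z^\ast_m$, so the first components coincide. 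Since the pair is instrumentally inappropriate for $\delta_{m+i}$, it must be that $z^\ast_m\notin\calZ''_{\delta_{m+i}}$. The count $c^\ast_{m+n}$ is computed with respect to the same decision over the enlarged list. It therefore equals $c^\ast_m$ plus the number of new scenarios that are post-design inappropriate for $z^\ast_m$, which is at least one. Hence $c^\ast_{m+n}\ge c^\ast_m+1$, and the pairs differ in their second component.
\end{itemize}
In both cases $M^+_{m+n}\neq M^+_m$, which completes the verification.

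The main obstacle is case (b). Without the counter, $M_m$ alone would not change when a post-design violation occurs, so consistency would fail. Case (b) also depends on first establishing, via the confirmation property of $M_m$, that the decision itself is unchanged, so that the counts before and after are comparable.
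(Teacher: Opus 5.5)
Your proof is correct and follows the paper's argument. Permutation invariance comes from that of $M_m$ together with the order-independence of the count; confirmation comes from baseline confirmation plus $c^\ast_{m+n}=c^\ast_m$; and responsiveness is a case split in which either the first component changes or the counter strictly increases. Your split on whether any new scenario is baseline-inappropriate, using baseline confirmation to fix $z^\ast_{m+n}=z^\ast_m$ in the second case, is a slightly tidier rendering of the paper's ``only (b) occurs'' step.
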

\medskip 
\textit{Proof:} We recall that $M_m^+(\delta_1,\ldots,\delta_m)=(z_m^\ast,c^\ast_m)$, where $z_m^\ast=M_m(\delta_1,\ldots,\delta_m)$ and $c^\ast_m=\#\big\{i \in \{1,\ldots,m\} \text{ such that }  z_m^\ast \notin \calZ''_{\delta_i}\big\}$. Clearly, $M^+_m$ is  {\em permutation invariant} because $M_m$ is permutation invariant by Assumption~\ref{ass:Consistency_of_appr1}, and $c^\ast_m$ does not depend on the order in which data points appear in the sample. To check {\em confirmation under appropriateness}, observe that $M^+_m(\delta_1,\ldots,\delta_m) \in \calZ^+_{\delta_{m+i}}$ implies that (a) $z^\ast_m\in\calZ'_{\delta_{m+i}}$ and (b) $z^\ast_m\in\calZ''_{\delta_{m+i}}$. Since $M_m$ satisfies {\em confirmation under appropriateness} with respect to baseline appropriateness (Assumption \ref{ass:Consistency_of_appr1}), condition (a) entails that $z^\ast_{m+n}=z^\ast_m$; moreover, this fact along with condition (b) entails that $c^\ast_{m+n}=c^\ast_m$. Regarding {\em responsiveness to inappropriateness}, observe that if $M^+_{m}(\delta_1,\ldots,\delta_m)\notin\calZ^{+}_{\delta_{m+i}}$ for some $i$, then either (a) $z^\ast_{m}\notin \calZ'_{\delta_{m+i}}$ or (b) $z^\ast_{m}\notin \calZ''_{\delta_{m+i}}$. If (a),  then $z^\ast_{m+n}\neq z^\ast_m$ because $M_m$ is responsive to baseline inappropriateness, and therefore $(z^\ast_{m+n},c^\ast_{m+n})$ differs from $(z^\ast_{m},c^\ast_{m})$ in its first component; on the other hand, if $z^\ast_{m+n} = z^\ast_m$ because only (b) occurs, then $(z^\ast_{m+n},c^\ast_{m+n})$  and $(z^\ast_{m},c^\ast_{m})$ must differ in their second components, thus concluding the proof. \qed

\medskip 
Lemma \ref{lemma:M+consistent} enables the use of the results of the scenario approach to certify the instrumental risk based on the complexity of the instrumental map. In turn, this result provides the basis to establish the post-design risk, as shown in Theorem \ref{th:upper_bound} below. 

The formal definitions of support lists and of complexity for the instrumental map are the same as in Definition \ref{def:complexity1}, where: the map $M_m$ is replaced by the instrumental map $M^+_m$; the word ``baseline'' is replaced by ``instrumental''; the symbol ``$s^{+,\ast}_m$'' is used in place of ``${s}^{b,\ast}_m$'' to denote the {\em instrumental complexity}. 
\medskip 
\begin{remark}[on computing $s^{+,\ast}_N$] \label{rmk:computings*} Any instrumental support list needs to be formed by a baseline support list augmented with the remaining $\delta_i$'s in the training set for which $z^\ast_N$ is post-design inappropriate.\footnote{Indeed, an instrumental support list must return $z_N^\ast$; by progressively removing from the instrumental support list scenarios that do not change $z_N^\ast$, if any, the process terminates with a baseline support list, so that an instrumental support list must contain a baseline support list. Augmenting the baseline support list with the remaining $\delta_i$'s in the training set for which $z^\ast_N$ is post-design inappropriate is then necessary to return the second element $c_N^\ast$ in the instrumental solution. Finally, no other scenario can be added because this would infringe the requirement of irreducibility of the instrumental support list.} Nonetheless, appending to a given baseline support list all the $\delta_i$'s from the remaining ones for which $z^\ast_N$ is post-design inappropriate surely suffices to preserve $(z^\ast_N,c^\ast_N)$ but may result in a reducible list. Therefore, constructing a minimal instrumental support list requires care. On the other hand, it is worth observing that the cardinality of any list formed by a baseline support list augmented with the $\delta_i$'s in the training set for which $z^\ast_N$ is post-design inappropriate provides an upper bound on the instrumental complexity $s^{+,\ast}_N$, and using such upper bounds in place of the actual complexity in Theorem \ref{th:upper_bound} returns a valid bound on the risk since function $\epsilon(\cdot)$ can be shown to be increasing in its argument. 
\hfill $\star$  
\end{remark}
\medskip 
We are now ready to prove our first fundamental result, which provides an upper bound on $R''(z^\ast_N)$.
\medskip 
\begin{theorem}[upper bound for post-design risk] \label{th:upper_bound}  
Let $\beta \in (0,1)$ be a confidence parameter, and define
\[
\epsilon(k) := 1 - t(k), \quad k = 0, \dots, N-1, \quad \text{and} \quad \epsilon(N) = 1,
\]
where $t(k)$ is the unique solution of \eqref{eq:ruleps} in $(0,1)$. Then, under Assumption \ref{ass:Consistency_of_appr1}, it holds for every probability $\mathbb{P}$ that
\begin{equation}\label{eq:upper_bound_on_inappr2_with_eps}
	\mathbb{P}^N\{R''(z^\ast_N) \leq \epsilon(s^{+,\ast}_N)\} \geq 1-\beta.
\end{equation}
\end{theorem}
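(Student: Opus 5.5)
The plan is to reduce the statement to the known baseline result \eqref{appr1_upper_bound_with_eps}, i.e., Theorem~4 of \cite{garatti2024noncvx}, applied to the instrumental framework rather than the original one. That theorem is a general statement. It holds for any decision space, any notion of appropriateness, and any family of maps satisfying Property~1 of \cite{garatti2024noncvx}, for every probability $\mathbb{P}$.

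First, we set up the instrumental framework. The decision space is $\calZ^+=\calZ\times\mathbb{N}$, the maps are $M_m^+$, and the appropriateness sets are $\calZ^+_\delta$. Lemma~\ref{lemma:M+consistent} states precisely that $M_m^+$ satisfies permutation invariance, confirmation under appropriateness and responsiveness to inappropriateness with respect to $\calZ^+_\delta$. We also need the measurability of $\{\delta: (z,\ell)\in\calZ^+_\delta\}$. This holds because this set coincides with $\{\delta: z\in\calZ'_\delta\}\cap\{\delta: z\in\calZ''_\delta\}$, which is measurable by the tacit standing assumption.

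Second, we invoke Theorem~4 of \cite{garatti2024noncvx} for this instrumental setup. The complexity in that theorem is the minimal cardinality of an irreducible sub-list reproducing $M_N^+(\delta_1,\ldots,\delta_N)$, which is by definition $s^{+,\ast}_N$. The risk in that theorem is $R^+(z^\ast_N,c^\ast_N)$, and by \eqref{eq:R+(z)} this equals $R^+(z^\ast_N)$. We therefore obtain
\[
\mathbb{P}^N\{R^+(z^\ast_N)\le \epsilon(s^{+,\ast}_N)\}\ge 1-\beta,
\]
with the same function $\epsilon(\cdot)$ defined through \eqref{eq:ruleps}, since $\epsilon(\cdot)$ depends only on $N$ and $\beta$.

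Third, we use a simple monotonicity argument. For every $z$ we have
\[
\{\delta: z\notin\calZ''_\delta\}\subseteq\{\delta: z\notin\calZ'_\delta\cap\calZ''_\delta\},
\]
so $R''(z)\le R^+(z)$. Hence the event $\{R^+(z^\ast_N)\le\epsilon(s^{+,\ast}_N)\}$ is contained in the event $\{R''(z^\ast_N)\le\epsilon(s^{+,\ast}_N)\}$, and \eqref{eq:upper_bound_on_inappr2_with_eps} follows.

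The main obstacle is checking that the general theorem applies verbatim to the augmented space. The consistency conditions are supplied by Lemma~\ref{lemma:M+consistent}. What requires care is making sure the general theorem imposes no further structural requirement, such as a topological or measurability condition on the decision space. We also need the random quantity $s^{+,\ast}_N$ to be well defined and measurable as a function of the sample; if necessary, this can be tacitly assumed in the same way as the other measurability conditions.
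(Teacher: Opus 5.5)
Your proposal is correct and matches the paper's proof: Lemma~\ref{lemma:M+consistent} lets you apply Theorem~4 of \cite{garatti2024noncvx} to the instrumental maps $M_m^+$, which gives $\mathbb{P}^N\{R^+(z^\ast_N)\le\epsilon(s^{+,\ast}_N)\}\ge 1-\beta$, and the inequality $R''(z)\le R^+(z)$ then transfers the bound to the post-design risk. Your remarks on measurability and on $s^{+,\ast}_N$ being well defined go beyond the paper's proof, which leaves these points tacit.
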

\medskip 
\textit{Proof:} Thanks to Lemma \ref{lemma:M+consistent}, we can apply Theorem 4 in \cite{garatti2024noncvx} to the instrumental maps $M_m^+$, with instrumental appropriateness given by $\calZ^+_\delta$, which yields 
$$
\mathbb{P}^N\{R^+(z^\ast_N) \leq \epsilon({s^{+,\ast}_N})\} \geq 1-\beta.
$$
Equation \eqref{eq:upper_bound_on_inappr2_with_eps} then readily follows from observing that $R''(z^\ast_N) \leq R^+(z^\ast_N)$ in view of \eqref{eq:R+(z)}, so that 
$$
\mathbb{P}^N\{R''(z^\ast_N) \leq \epsilon({s^{+,\ast}_N})\} \geq \mathbb{P}^N\{R^+(z^\ast_N) \leq \epsilon({s^{+,\ast}_N})\}.
$$
This concludes the proof. 
\qed

\section{Tight post-design risk certification} \label{sec:strictercertification}

Lower bounds on the post-design risk can be established under additional conditions, thus providing certifications that place the risk sandwiched between a lower and an upper threshold. This topic is explored in the present section. 
 
A first situation of interest occurs when post-design appropriateness is more stringent than baseline appropriateness, that is, mathematically, $\calZ''_{\delta}\subseteq \calZ'_{\delta}$. An example of this situation is provided in Section \ref{sec:examples}-A, where post-design appropriateness corresponds to performance requirements that are more stringent than those used at the design stage in an $H_2$ control problem. In this case, we say that the two levels of appropriateness are {\em nested}, and a tight \emph{upper and lower bound} on $R''(z^\ast_N)$ can be obtained under the following non-degeneracy assumption borrowed from \cite{garatti2024noncvx}.
\medskip 
\begin{assumption}[baseline non-degeneracy]\label{cond:non-degeneracy}
 For any $m$, with probability 1,
there exists a unique choice of indexes $i_1 < i_2 < \cdots < i_k$ such that $\delta_{i_1},\ldots,\delta_{i_k}$ is a baseline support list for $\delta_1,\ldots,\delta_m$.
\hfill$\star$
\end{assumption}

\medskip 
\begin{theorem}[upper and lower bounds under baseline non-degeneracy in the nested case]
\label{th:nondegerate,nested}
Let $\beta \in (0,1)$ be a confidence parameter. 
For each $k=0,\ldots,N-1$, consider the polynomial equation in the variable $t$
\begin{equation}
	\label{eq:cvxwjrulen-1}
	{N \choose k} t^{N-k} - \frac{\beta}{2N}\sum_{i=k}^{N-1} {i\choose k} t^{i-k}-\frac{\beta}{6N}\sum_{i=N+1}^{4N}{i\choose k} t^{i-k}=0,
\end{equation}
and, for $k=N$, consider
\begin{equation}
	\label{eq:cvxwjrulen}
	1-\frac{\beta}{6N}\sum_{i=N+1}^{4N}{i\choose N} t^{i-N}=0.
\end{equation}
For each $k=0,1,\ldots,N-1$, equation \eqref{eq:cvxwjrulen-1} has exactly two solutions in $[0,+\infty)$, denoted by $\underline{t}(k)$ and $\overline{t}(k)$ with $\underline{t}(k)\leq \overline{t}(k)$. 
For $k=N$, equation \eqref{eq:cvxwjrulen} admits a unique solution in $[0,+\infty)$, denoted by $\overline{t}(N)$, and we set $\underline{t}(N) = 0$. Define
$$
\begin{cases}
	\underline{\epsilon}(k) := \max\{0, 1 - \overline{t}(k)\} \\
	\overline{\epsilon}(k) := 1 - \underline{t}(k) 
\end{cases}
, \quad k=0,\ldots,N.\footnote{Also $\underline{\epsilon}(k)$ and $\overline{\epsilon}(k)$ can be efficiently computed numerically by bisection. See Appendix B.2 of \cite{campi2023compression} for a ready-to-use MATLAB code.} 
$$
Under Assumptions \ref{ass:Consistency_of_appr1} and \ref{cond:non-degeneracy}, and assuming that $\calZ''_\delta \subseteq \calZ'_\delta$ for all $\delta$, it holds that 
\begin{equation}
\label{upper and lower bound}
\mathbb{P}^N\{\underline{\epsilon}(s^{+,\ast}_N) \leq R''(z^\ast_N) \leq \overline{\epsilon}(s^{+,\ast}_N)\} \geq 1-\beta.
\end{equation}
\end{theorem}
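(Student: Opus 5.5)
The plan is to reduce the statement to the known two-sided result for consistent maps under non-degeneracy, applied to the instrumental maps $M_m^+$ of Lemma~\ref{lemma:M+consistent}. The relevant result is Theorem~2 in \cite{GarCa2019}, in the form restated in \cite{garatti2024noncvx,campi2023compression} with the polynomial \eqref{eq:cvxwjrulen-1}--\eqref{eq:cvxwjrulen}. It states that if a decision map is consistent (Property 1) and non-degenerate, then $\mathbb{P}^N\{\underline{\epsilon}(s^\ast_N)\le R(z^\ast_N)\le\overline{\epsilon}(s^\ast_N)\}\ge 1-\beta$. Lemma~\ref{lemma:M+consistent} already supplies consistency of $M^+_m$ with respect to $\calZ^+_\delta$.

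Two further ingredients are needed. First, in the nested case $\calZ''_\delta\subseteq\calZ'_\delta$ we have $\calZ'_\delta\cap\calZ''_\delta=\calZ''_\delta$. By \eqref{eq:R+(z)} this gives $R^+(z)=R''(z)$ exactly, rather than only the inequality used in the proof of Theorem~\ref{th:upper_bound}. Hence a two-sided bound on $R^+(z^\ast_N)$ is precisely the claim \eqref{upper and lower bound}.

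Second, I must show that the instrumental maps are non-degenerate, i.e.\ that for every $m$, with probability one, the instrumental support list is unique. Fix a sample $\delta_1,\ldots,\delta_m$ for which the baseline support list is unique, which by Assumption~\ref{cond:non-degeneracy} happens with probability one. By Remark~\ref{rmk:computings*}, any instrumental support list consists of a baseline support list $B$ together with all remaining $\delta_i$ for which $z^\ast_m\notin\calZ''_{\delta_i}$. I would re-derive this carefully, showing that removing scenarios from an instrumental support list while preserving $z^\ast_m$ eventually yields a baseline support list contained in it, which is therefore the unique $B$. Given uniqueness of $B$, the candidate is forced to be the list $B\cup V$, where $V$ is the set of post-design violators outside $B$.

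The main obstacle is irreducibility: if $B\cup V$ were reducible, no instrumental support list would exist, or some other list would arise. Here nestedness matters. Each element of $V$ is post-design inappropriate for $z^\ast_m$, so it cannot be dropped without changing $c^\ast$ (if $z^\ast$ stays the same, the count drops). Removing an element $\delta_j\in B$ leaves the sub-list $(B\setminus\{\delta_j\})\cup V$, and I must show it does not return $(z^\ast_m,c^\ast_m)$. Suppose it returned $z^\ast_m$. Then by confirmation and responsiveness, applied with the sub-list as base and the full sample as extension, the original decision would be reproduced. The baseline-appropriate elements of $V$ are irrelevant to the decision. I would argue that $B\setminus\{\delta_j\}$ plus elements confirmed appropriate would then also give $z^\ast_m$, and that this contradicts uniqueness or irreducibility of $B$.

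This step is delicate, since elements of $V$ could in principle be baseline inappropriate for intermediate decisions. I would handle it using consistency: $M(B\setminus\{\delta_j\}\cup V)=z^\ast_m$ together with $z^\ast_m\in\calZ'_{\delta}$ for all other $\delta$ implies, by confirmation, that $M$ of the full sample equals $z^\ast_m$. Thus $(B\setminus\{\delta_j\})\cup V$ contains a baseline support list, extracted by successive removal, different from $B$ because it omits $\delta_j$. This contradicts Assumption~\ref{cond:non-degeneracy}. Hence the instrumental support list is unique almost surely, non-degeneracy holds for $M^+_m$, and invoking the two-sided theorem concludes the proof.
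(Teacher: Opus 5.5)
Your proposal is correct and follows the paper's proof: consistency of $M^+_m$ comes from Lemma~\ref{lemma:M+consistent}, non-degeneracy transfers through the structure of instrumental support lists noted in Remark~\ref{rmk:computings*}, Theorem~2 of \cite{GarCa2019} is applied to $M^+_m$, and nestedness gives $R^+=R''$. Your explicit irreducibility argument is sound but not needed, because an instrumental support list always exists (by successive removal from the full sample), so uniqueness is immediate once every such list is forced to be $B\cup V$; also, nestedness plays no role in the non-degeneracy transfer and enters only through $R^+=R''$.
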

\medskip 
\textit{Proof:} We claim that baseline non-degeneracy implies instrumental non-degeneracy, that is, with probability one there is a unique choice of indexes $i^+_1,\ldots,i^+_{k^+}$ such that $\delta_{i^+_1},\ldots,\delta_{i^+_{k^+}}$ is an instrumental support list for $\delta_1,\ldots,\delta_m$. To see this, recall that any instrumental support list needs to be formed by a baseline support list augmented with the remaining $\delta_i$'s in the training set for which $z^\ast_N$ is post-design inappropriate (a fact already noticed in Remark \ref{rmk:computings*}). Since the indexes identifying the baseline support list are unique with probability one, the uniqueness of the indexes identifying the instrumental support list with probability one immediately follows. \\ 
Thanks to the instrumental non-degeneracy, Theorem 2 of \cite{GarCa2019} can be applied to $M^+_m$, with instrumental appropriateness given by $\calZ^+_\delta$, yielding
\begin{equation}\label{eq:uplowinstr}\mathbb{P}^N\{\underline{\epsilon}(s^{+,\ast}_N)\leq R^+(z^\ast_N)\leq \overline{\epsilon}(s^{+,\ast}_N)\}\geq 1-\beta
\end{equation}
To close the proof, note that the assumption that $\calZ''_\delta \subseteq \calZ'_\delta$ for all $\delta$ (nested case) implies that $\calZ''_\delta \cap \calZ'_\delta = \calZ''_\delta$ for all $\delta$, Therefore (see Definition \ref{def:risks} and equation \eqref{eq:R+(z)}), we have that $R^+(z)=R''(z)$ for all $z$, and \eqref{eq:uplowinstr} is equivalent to \eqref{upper and lower bound}. 
\qed

\medskip 
The bounds in Theorem \ref{th:nondegerate,nested} involve computing the zeros of the polynomials \eqref{eq:cvxwjrulen-1} and \eqref{eq:cvxwjrulen}, which offer little intuitive insight into the structure of these bounds. On the other hand, as we have seen, these bounds follow from Theorem 2 of \cite{GarCa2019}, and their expressions have been analyzed in greater detail in Proposition 8 of \cite{campi2023compression}, which provides insight into their dependencies on $N$, $\beta$, and $k$. Specifically, Proposition 8 establishes that 
\begin{eqnarray}
\overline{\epsilon}(k) & \leq &
\frac{k}{N} 
+ \frac{2\sqrt{k+1}}{N} \left( \sqrt{\ln(k+1)} + 4 \right) \nonumber \\
& & + \frac{2\sqrt{k+1}\sqrt{\ln \tfrac{1}{\beta}}}{N} 
+ \frac{\ln \tfrac{1}{\beta}}{N}, \nonumber 
\end{eqnarray}
and that 
$$\underline{\epsilon}(k) \geq \frac{k}{N} 
- \frac{3\sqrt{k+1}}{N} \left( \sqrt{\ln(k+1)} + 2 \right) 
- \frac{3\sqrt{k+1}\sqrt{\ln \tfrac{1}{\beta}}}{N}.
$$
Hence, the bounds defines intervals around $\frac{k}{N}$, with a margin that squeezes to zero as $\mathcal{O}(\sqrt{\ln(N)}/\sqrt{N})$ uniformly in $k$, while the dependence on $\beta$ is logarithmic. 
\\
 
We next move to considering the case where condition $\calZ''_\delta \subseteq \calZ'_\delta$ can be violated, still under baseline non-degeneracy. In this case, a non-vacuous lower bound can sometimes be obtained using the following theorem (an instance of use is provided in the example of Section \ref{sec:examples}-B). 
\medskip 
\begin{theorem}[upper and lower bounds under baseline non-degeneracy]
\label{th:nondegerate}
Let $\beta \in (0,1)$ be a confidence parameter, and define $\underline{\epsilon}(k),\overline{\epsilon}(k)$ as in Theorem \ref{th:nondegerate,nested} and $\epsilon(k)$ as in Theorem \ref{th:upper_bound}.
Then, under Assumptions \ref{ass:Consistency_of_appr1} and \ref{cond:non-degeneracy}, it holds that
$$
\mathbb{P}^N\{ \underline{\epsilon}(s^{+,\ast}_N)-\epsilon({s}^{b,\ast}_N)\leq R''(z^\ast_N) \leq \overline{\epsilon}(s^{+,\ast}_N)  \}\geq 1-2\beta.
$$
\end{theorem}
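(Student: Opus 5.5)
The plan is to combine the two-sided instrumental bound from the proof of Theorem \ref{th:nondegerate,nested} with the baseline upper bound \eqref{appr1_upper_bound_with_eps}, and then use a union bound. The set-theoretic identities linking the three risks turn control of $R^+$ and $R'$ into control of $R''$. This union bound is why the confidence degrades from $1-\beta$ to $1-2\beta$.

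\textbf{Step 1 (instrumental two-sided bound).} The argument at the beginning of the proof of Theorem \ref{th:nondegerate,nested} used only Assumptions \ref{ass:Consistency_of_appr1} and \ref{cond:non-degeneracy}, not the nesting condition. It shows that baseline non-degeneracy implies instrumental non-degeneracy. Together with Lemma \ref{lemma:M+consistent}, this allows Theorem 2 of \cite{GarCa2019} to be applied to the maps $M^+_m$, which gives
\[
\mathbb{P}^N\{\underline{\epsilon}(s^{+,\ast}_N)\leq R^+(z^\ast_N)\leq \overline{\epsilon}(s^{+,\ast}_N)\}\geq 1-\beta.
\]

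\textbf{Step 2 (baseline upper bound).} By \eqref{appr1_upper_bound_with_eps}, which is Theorem 4 of \cite{garatti2024noncvx} under Assumption \ref{ass:Consistency_of_appr1}, we have $\mathbb{P}^N\{R'(z^\ast_N)\leq \epsilon(s^{b,\ast}_N)\}\geq 1-\beta$.

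\textbf{Step 3 (deterministic risk inequalities).} For every $z$, the event $\{z\notin \calZ'_\delta\cap\calZ''_\delta\}$ equals $\{z\notin\calZ'_\delta\}\cup\{z\notin\calZ''_\delta\}$. Subadditivity then gives
\[
R^+(z)\leq R'(z)+R''(z), \quad \text{that is,} \quad R''(z)\geq R^+(z)-R'(z).
\]
In the other direction, $R''(z)\leq R^+(z)$ holds, as already noted in the proof of Theorem \ref{th:upper_bound}.

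\textbf{Step 4 (union bound).} Let $A$ be the event of Step 1 and $B$ the event of Step 2. Then $\mathbb{P}^N(A\cap B)\geq 1-2\beta$. On $A\cap B$ the chain
\[
R''(z^\ast_N)\geq R^+(z^\ast_N)-R'(z^\ast_N)\geq \underline{\epsilon}(s^{+,\ast}_N)-\epsilon(s^{b,\ast}_N)
\]
gives the lower bound. The chain $R''(z^\ast_N)\leq R^+(z^\ast_N)\leq\overline{\epsilon}(s^{+,\ast}_N)$ gives the upper bound. This proves the claim.

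The only step needing care is Step 1. There I must check that the proof of instrumental non-degeneracy in Theorem \ref{th:nondegerate,nested} does not rely on $\calZ''_\delta\subseteq\calZ'_\delta$. It does not: the argument only uses that an instrumental support list is a baseline support list augmented with the remaining post-design-inappropriate scenarios, as in Remark \ref{rmk:computings*}. So with probability one the instrumental support list is unique whenever the baseline one is. The remaining steps are routine.
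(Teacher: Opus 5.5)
Your proof is correct and follows the paper's argument step by step. It uses the deterministic sandwich $R^+(z)-R'(z)\le R''(z)\le R^+(z)$, the two-sided instrumental bound \eqref{eq:uplowinstr} from the proof of Theorem~\ref{th:nondegerate,nested}, and the baseline bound \eqref{appr1_upper_bound_with_eps}, combined by a union bound to get confidence $1-2\beta$; your explicit check that the instrumental non-degeneracy argument does not use the nesting condition is something the paper leaves implicit.
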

\medskip 
\textit{Proof:}
By applying the union bound to relation $\{\delta\in\Delta\,:\, z \notin \calZ'_\delta\cap\calZ''_\delta \} \subseteq \{\delta\in\Delta\,:\, z \notin \calZ'_\delta\} \cup \{\delta\in\Delta\,:\, z \notin \calZ''_\delta \}$, we obtain that $R^+(z) \leq R'(z) + R''(z)$; moreover, $R''(z) \leq R^+(z)$ follows from the very definition of post-design and instrumental risk. Thus, it holds that $R^+(z) - R'(z) \leq R''(z) \leq R^+(z)$ for all $z \in \calZ$, which in turn gives
\begin{equation}\label{unionbound} 
	R^+(z^\ast_N) - R'(z^\ast_N) \leq R''(z^\ast_N) \leq R^+(z^\ast_N).
\end{equation}
By Theorem 4 in \cite{garatti2024noncvx}, $R'(z^\ast_N)$ exceeds $\epsilon({s}^{b,\ast}_N)$ with probability at most $\beta$, a fact that we have already recalled in equation~\eqref{appr1_upper_bound_with_eps}. On the other hand, in the proof of Theorem \ref{th:nondegerate,nested}, we have established the validity of equation \eqref{eq:uplowinstr}, which ensures that  $R^+(z^\ast_N) \notin [\underline{\epsilon}(s^{+,\ast}_N), \overline{\epsilon}(s^{+,\ast}_N)]$ with probability at most $\beta$. Therefore, $R'(z^\ast_N) \leq \epsilon({s}^{b,\ast}_N)$ and $\underline{\epsilon}(s^{+,\ast}_N) \leq R^+(z^\ast_N)  \leq \overline{\epsilon}(s^{+,\ast}_N)$ occur simultaneously with probability no smaller than $1-2\beta$. From this, in view of \eqref{unionbound}, we conclude that relation
$$
\underline{\epsilon}(s^{+,\ast}_N)-\epsilon({s}^{b,\ast}_N) \leq R''(z^\ast_N) \leq \overline{\epsilon}(s^{+,\ast}_N)
$$
holds with probability no smaller than $1 - 2\beta$. This concludes the proof. 
\qed

\section{Examples} 
\label{sec:examples}

We demonstrate the usefulness of the developed theory through two significant control-theoretic settings: $H_2$ control and pole-placement. 

\subsection{Probabilistically robust $H_2$ control}

We consider the $H_2$ control problem for the lateral motion of an aircraft, a classic example originally introduced in \cite{tyler1966use} and later revisited in several papers and textbooks, \cite{polyak2001probabilistic,anderson2007optimal,tempo2013randomized,campi2018introduction}. The numerical values that we use are taken from \cite{tyler1966use}, while the uncertainty intervals are from \cite{campi2018introduction}. 

The lateral motion of the aircraft is described by the equation 
{\small \begin{eqnarray}
    \dot{x}(t) &  =&
  \begin{bmatrix}
0 & 1 & 0 & 0 \\
0 & L_p & L_\beta & L_r   \\
0.086 & 0 & -0.11 & -1  \\
0.086N_{\dot{\beta}} & N_p & N_\beta+0.11 N_{\dot{\beta}} & N_r
\end{bmatrix}x(t)\nonumber \\
& & + \begin{bmatrix}
 0 & 0 \\
0 & -3.91 \\
0.035 & 0  \\
-2.53 & 0.31
\end{bmatrix}u(t),\end{eqnarray}
\nonumber
}
where the four state variables are the bank angle, the derivative of the bank angle, the sideslip angle, and the yaw rate. The two inputs are the rudder deflection and the aileron deflection. The various parameters that appear in the state matrix are uncertain, i.e., the uncertain variable is $\delta = [L_p, L_\beta, L_r, N_p, N_\beta, N_{\dot{\beta}},  N_r]$, so $\Delta$ is a subset of $\mathbb{R}^7$.  For short, the state and input matrices will be written as $A_\delta$ and $B$.

Given two symmetric positive definite matrices $S$  and $R$ and an initial condition $x_0\in\mathbb{R}^4$,  we aim to design the state feedback law $u(t) = K x(t)$ that minimizes the cost function
$$
J(K,\delta) = \int_{0}^{\infty} \left[ x(t)^\top S x(t) + u(t)^\top R u(t) \right] dt,
$$
while ensuring that the closed-loop system is quadratically stable within a high-probability subset of $\Delta$.

\medskip 
\subsubsection{Review of the $H_2$ control problem with no uncertainty} 

We consider here the case in which there is no uncertainty, that is, the value of $\delta$ is fixed, and review some known facts in preparation of the probabilistically robust part. 

Given the system $\dot{x}(t)=A_\delta x(t)+B u(t)$ with $x(0)=x_0$, and a stabilizing control law $u(t)=Kx(t)$, the system state evolution is given by $x(t)=e^{(A_\delta+BK)t}x_0$, $t \! \geq \! 0$. Substituting this expression into the cost function yields  $J(K,\delta)=x_0^\top P x_0$, with $P=\int_{0}^{\infty} \left[ e^{(A_\delta+BK)^\top t}(S+K^\top R K)e^{(A_\delta+BK)t} \right]dt$, which is positive definite in view of the positive definiteness of $S$ and $R$. Observing now that $(A_\delta+BK)^\top P + P(A_\delta+BK)=\int_{0}^{\infty} \frac{d}{dt} \left[ e^{(A_\delta+BK)^\top t}(S+K^\top R K)e^{(A_\delta+BK)t} \right]dt$, which, by an application of the fundamental theorem of calculus, equals $-(S+K^\top R K)$ because the closed-loop system is stable, we conclude that $P$ satisfies the Lyapunov equation 
\begin{equation}
\label{Lyapunov equation}
(A_\delta+BK)^\top P + P(A_\delta+BK) = -(S+K^\top R K). 
\end{equation}
Moreover, from the stability of $A_\delta+BK$ easily follows that $P$ is the only solution of \eqref{Lyapunov equation}. Summarizing: a stabilizing $K$ yields $J(K,\delta)=x_0^\top Px_0$, where $P\succ 0$ is the only solution of the Lyapunov equation \eqref{Lyapunov equation}. Viceversa, given a pair of matrices $(K,P)$, with $P\succ 0$ satisfying the Lyapunov equation \eqref{Lyapunov equation}, it is straightforward to check that $V(x)=x^\top P x$ is a Lyapunov function for the closed-loop system, and therefore $K$ is stabilizing, and that $J(K,\delta) = x_0^\top P x_0$. Thus, the problem
\begin{IEEEeqnarray}{l}
	\label{LQsingledelta}
	\min_{K,P \succ 0} \quad x_0^\top P x_0  \\
	\text{subject to} \nonumber\\[2pt]
	 (A_\delta+BK)^\top P + P(A_\delta+BK)=-(S+K^\top R K) \nonumber 
\end{IEEEeqnarray}
is equivalent to minimizing $J(K,\delta)$ over all stabilizing $K$. Note also that problem \eqref{LQsingledelta} can as well be infeasible, which happens whenever $(A_\delta,B)$ is not stabilizable. 

\medskip 
\subsubsection{The scenario problem} 
We next consider the case of a variable $\delta$. 

Suppose that $\{(A_\delta,B)\}$ is quadratically stabilizable, that is, there exist a feedback gain $K$ and a matrix $P \succ 0$ such that $(A_\delta+BK)^\top P + P(A_\delta+BK) \prec 0$ for all $\delta$.\footnote{This assumption serves the purpose to make the scenario problem stated below in \eqref{scenario_problem} always feasible. While feasibility is not strictly required, and the reader can see a treatment of the case where infeasibility may occur in Section~4.2 of \cite{garatti2024noncvx}, it is here assumed to avoid complications that do not contribute to the substance of the discussion.} Next, suppose we have available a sample of $N$ i.i.d. scenarios $\delta_1,\ldots,\delta_N$ of the uncertainty parameter $\delta$, and consider the following scenario optimization problem inspired by \eqref{LQsingledelta}: 
\begin{IEEEeqnarray}{l}
	\label{scenario_problem}
\min_{K,P \succ 0} \quad x_0^\top P x_0  \\
	\text{subject to} \nonumber\\[2pt]
	(A_{\delta_i}+BK)^\top P + P(A_{\delta_i}+BK) \preceq -(S+K^\top R K), \nonumber \\ \hspace{5cm} i=1,\ldots,N. \nonumber
\end{IEEEeqnarray}
Notice that the constraint in \eqref{scenario_problem} is expressed as an inequality, a fact that makes the problem always feasible: due to quadratic stabilizability, there exists a pair $(K,P)$ that makes the left-hand side of the constraint negative definite for all $i$; then, the requirement that the left-hand side is less than or equal to $ -(S+K^\top R K)$ for all $i$ can be obtained by an appropriate rescaling of $P$. 

In special cases, it may happen that \eqref{scenario_problem} has multiple solutions. For example, if $x_0 = 0$ and $(K^\ast,P^\ast)$ is a solution, then $(K^\ast,2P^\ast)$ is also a solution. In the following, by $(K^\ast,P^\ast)$ we indicate the solution that is obtained by a tie-break rule in the domain of the pairs $(K,P)$, and refer to it as ``the solution'' of \eqref{scenario_problem}. Clearly, $K^\ast$ stabilizes all systems $(A_{\delta_i},B)$, $i=1,\ldots,N$, with $P^\ast$ being a common  Lyapunov function. Moreover, $x_0^\top P^\ast x_0$ serves as a common upper bound on the cost. To see this, rewrite the constraint $(A_{\delta_i}+BK^\ast)^\top P^\ast + P^\ast(A_{\delta_i}+BK^\ast) \preceq -(S+{K^\ast}^\top R K^\ast)$ as 
\begin{equation}
\label{Lyapunov+}
(A_{\delta_i}+BK^\ast)^\top P^\ast + P^\ast(A_{\delta_i}+BK^\ast) = -(S+{K^\ast}^\top R K^\ast) - \Xi_i, 
\end{equation}
where $\Xi_i\succ 0$ fills the gap between the two sides of the inequality. Then, we have  
\begin{align}
	\label{J(K*,delta_i)}
	& J(K^\ast,\delta_i) \nonumber\\
	&= x_0^\top \Big[\int_0^\infty
	e^{(A_{\delta_i}+BK^\ast)^\top t}
	\big(S+{K^\ast}^\top R K^\ast\big)
	e^{(A_{\delta_i}+BK^\ast)t}\,dt\Big] x_0 \nonumber\\
	&\le x_0^\top \Big[\int_0^\infty
	e^{(A_{\delta_i}+BK^\ast)^\top t}
	\big(S+{K^\ast}^\top R K^\ast \nonumber \\
	& \quad \quad +\Xi_i\big)
	e^{(A_{\delta_i}+BK^\ast)t}\,dt\Big] x_0 \nonumber\\
	&\overset{\text{(\ref{Lyapunov+})}}{=}
	-\,x_0^\top \Big[\int_0^\infty
	e^{(A_{\delta_i}+BK^\ast)^\top t}
	\big((A_{\delta_i}+BK^\ast)^\top P^\ast \nonumber \\
	& \quad \quad + P^\ast(A_{\delta_i}+BK^\ast)\big)\,
	e^{(A_{\delta_i}+BK^\ast)t}\,dt\Big] x_0 \nonumber\\
	&= -\,x_0^\top \Big[\int_0^\infty
	\frac{d}{dt}\big(
	e^{(A_{\delta_i}+BK^\ast)^\top t} P^\ast
	e^{(A_{\delta_i}+BK^\ast)t}\big)\,dt\Big] x_0 \nonumber\\
	&= x_0^\top P^\ast x_0. 
\end{align}
\subsubsection{Baseline and post-design appropriateness}
Problem \eqref{scenario_problem}, rewritten for a generic $m$ in place of $N$, defines a decision map $M_m$ from $\delta_1,\ldots,\delta_m$ to $(P^\ast,K^\ast)$. 

We say that a pair $(P,K)$ is baseline appropriate for a $\delta$ if the constraint in \eqref{scenario_problem}, rewritten for this $\delta$, is satisfied: 
$$(A_{\delta}+BK)^\top P + P(A_{\delta}+BK)\preceq -(S+{K}^\top R K).$$ 
 
As can be readily verified (see e.g. Section 3 in \cite{garatti2024noncvx} for a detailed explanation), decision maps defined through a robust optimization problem,  where constraint satisfaction plays the role of baseline appropriateness, are consistent in the sense of Assumption \ref{ass:Consistency_of_appr1}. Hence, our maps $M_m$ are consistent. 

In post-design, we consider the obtained solution $(P^\ast,K^\ast)$ to be {\em post-design} appropriate if, for a predefined value of parameter $\gamma\in(0,1)$, the {\em more stringent} constraint 
$$(A_{\delta}+B K^\ast)^\top P^\ast + P^\ast(A_{\delta}+BK^\ast)\preceq -\frac{1}{\gamma}(S+{K^\ast}^\top R K^\ast)$$  is satisfied. By a computation similar to \eqref{J(K*,delta_i)}, it is easy to see that post-design appropriateness of $(P^\ast,K^\ast)$ for $\delta$ implies that $J(K^\ast,\delta)\leq \gamma x_0^\top P^\ast x_0$. 
\medskip 
\subsubsection{Equivalent formulation as an LMI for easy implementation}
The scenario problem \eqref{scenario_problem} can be reformulated as a convex program involving LMIs (Linear Matrix Inequalities). To this end, we first note that, since $P$ is invertible, the constraint in \eqref{scenario_problem} can be rewritten as 
$$P^{-1}(A_{\delta_i}+BK)^\top  \hspace{-3pt}+ (A_{\delta_i}+BK)P^{-1}\hspace{-3pt} \preceq \hspace{-3pt} -P^{-1}(S+K^\top R K)P^{-1}.$$
Therefore, introducing the new optimization variables $Q:=P^{-1}$ and $X:=K P^{-1}$ (the same change of variables is used e.g. in Chapter 7.2.1 of \cite{boyd1994linear}), the problem can be reformulated as follows 
\begin{align} 
\min_{K,Q \succ 0} \quad & x_0^\top Q^{-1} x_0  \\
\text{subject to} &  \nonumber \\ 
&\hspace{-1cm} Q A_{\delta_i}^\top + A_{\delta_i} Q + X^\top B^\top + B X \preceq - (QSQ +X^\top R X), \nonumber \\& \hspace{4cm} i=1,\ldots,N.  \nonumber
\end{align}
Moreover, $K^\ast$ can be recovered from the solution to this problem by formula $K^\ast =  X^\ast (Q^\ast)^{-1}$.  

Now, by an application of the Schur complement (see e.g. Chapter 2 of \cite{boyd1994linear}), the problem becomes
\begin{align}
    \min_{X,Q \succ 0} \quad & x_0^\top Q^{-1} x_0 \notag  \nonumber\\
    \text{subject to} &  \nonumber \\ 
    &\hspace{-1.5cm} \begin{bmatrix}
        -(Q A_{\delta_i}^\top + A_{\delta_i} Q + X^\top B^\top + B X) & \begin{bmatrix} X^\top & Q \end{bmatrix}  \\ 
        \begin{bmatrix}X \\ Q \end{bmatrix}  & \begin{bmatrix} R^{-1} & 0 \\ 0 & S^{-1} \end{bmatrix}
    \end{bmatrix} \succeq 0,\nonumber \\ 
     & \hspace{-14pt} i=1,\ldots,N,\nonumber
\end{align}
in which the constraints are convex LMIs, and $x_0^\top Q^{-1} x_0$ is a convex cost function, as it can be argued from the following reasoning. For $Q \succ 0$, the function $x_0^\top Q^{-1}x_0$ is convex if and only if its epigraph $\{Q,h :  x_0^\top Q^{-1} x_0 \leq h\}$ is a convex set. By applying the Schur complement, the latter set can be equivalently written as
$\{Q,h : h - x_0^\top Q^{-1} x_0 \geq 0  \} = \{Q,h: \begin{bmatrix} h &  x_0^\top\\  x_0 &  Q \end{bmatrix} \succeq 0\}$, which is the convex set defined by an LMI. 
\medskip 
\subsubsection{Numerical results}
We solved the lateral motion problem with $S=0.01I$, $R=I$, $x_0=[1,1,1,1]^\top$ and $N=2000$ scenarios,\footnote{
For reproducibility, we disclose that, following \cite{campi2018introduction}, the uncertain variables were independently and uniformly sampled from the following intervals:
$L_p \in [-2.93, -1]$, $L_\beta \in [-73.14, -4.75]$, $L_r \in [0.78, 3.18]$, $N_p \in [-0.042, 0.086]$, $N_\beta \in [2.59, 8.94]$, $N_{\dot{\beta}} \in [0, 0.1]$, $N_r \in [-0.39, -0.29]$. We remark that our approach does not make use of this knowledge, which is here provided only for reproducibility purposes.} and set $\beta=10^{-7}$. The LMI reformulation of the problem was solved in MATLAB$^\copyright$ with the CVX package, \cite{cvx,gb08}, obtaining 
$$P^\ast=\begin{bmatrix}  0.2678    &0.1462  & -1.1967&    0.3250\\
    0.1462   & 0.1323 &  -1.1120&    0.3413\\
   -1.1967  & -1.1120 &  22.2723 &  -4.7196\\
    0.3250 &   0.3413  & -4.7196 &    1.7959\end{bmatrix}$$
    $$ K^\ast=\begin{bmatrix} 0.8641    & 0.9024 & -12.7202 &   4.7089\\
    0.4709  &  0.4114 &  -2.8849&    0.7778\end{bmatrix}.$$ 
This solution yields $x_0^\top P^\ast x_0=12.0367$. The cardinality of the baseline support list was ${s}^{b,\ast}_N=4$, which, according to \eqref{appr1_upper_bound_with_eps}, corresponds to a {\em baseline risk} of at most 1.53\% (this is $\epsilon({s}^{b,\ast}_N) = \epsilon(4)$) with practical certainty (confidence $1-10^{-7}$).  

Since the value $12.0367$ is relatively large, we moved to post-design to assess the probability of a smaller value. With $\gamma = 0.8$, there were $71$ scenarios  for which the more stringent constraint 
$$(A_{\delta_i}+BK)^\top P^\ast + P^\ast(A_{\delta_i}+BK^\ast)\preceq -\frac{1}{0.8}(S+{K^\ast}^\top R K^\ast)$$ 
is violated, yielding $s^{+,\ast}_N \leq 75$ (see Remark \ref{rmk:computings*}). Applying Theorem \ref{th:upper_bound}, we obtain a bound on the post-design risk of 6.94\% with confidence $1-10^{-7}$. This means that no more than 6.94\% of  the uncertain plants will result in a cost larger than $0.8\cdot 12.0367 = 9.6294$.

\subsection{Pole placement}
\label{sec:general_robust_poles}

A linear plant is described by a transfer function with uncertain parameters: 
$$
G(s, \delta) = \frac{b(s, \delta)}{a(s, \delta)} = \frac{b_0(\delta)s^n + b_1(\delta)s^{n-1} + \cdots + b_n(\delta)}{s^n + a_1(\delta)s^{n-1} + \cdots + a_n(\delta)}.
$$
We aim to design a controller 
$$
C(s) = \frac{f(s)}{g(s)} = \frac{f_1 s^{n-1} + f_2 s^{n-2} + \cdots + f_{n}}{s^n + g_1 s^{n-1} + \cdots + g_n}
$$
\begin{figure}[h]
    \centering
     \begin{tikzpicture}[auto, node distance=2cm, >=latex]
\scalebox{0.8}{
        \node[draw, circle, minimum size=10pt, inner sep=2pt] (sum) {};
        \node[draw, fill=blue!30, minimum width=2cm, minimum height=1.2cm, right=1.1cm of sum, drop shadow] (controller) {\Large $C(s)$};
        \node[draw, fill=red!30, minimum width=2cm, minimum height=1.2cm, right=1.1cm of controller, drop shadow] (plant) {\Large $G(s,\delta)$};
        \node [coordinate, right=1.1cm of plant] (fb_attach) {}; 
        \node [coordinate, right=1cm of fb_attach] (output) {};  
        \node [coordinate, below=1cm of controller] (feedback) {};
        \node [coordinate, left=1cm of sum] (input) {}; 

        \draw [->] (input) -- (sum);
        \draw [->] (sum) -- (controller);
        \draw [->] (controller) -- (plant);
        \draw [-] (plant) -- (fb_attach); 
        \draw [->] (fb_attach) -- (output); 
        \draw [->] (fb_attach) |- (feedback) -| (sum); 

        \node at ($(sum.north) + (-0.3,0.05)$) {\footnotesize $+$}; 
        \node at ($(sum.south) + (-0.2,-0.12)$) {\footnotesize $-$}; 
}
    \end{tikzpicture}
\caption{Feedback configuration for the pole placement problem.}\label{fig:closed_loop}
\end{figure}
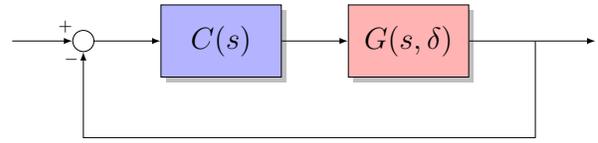
such that, with high probability, the poles of the closed-loop system in Figure \ref{fig:closed_loop}
have real part no greater than $\overline{r}$ and damping coefficient no less than $\underline{\zeta}$. This requirement defines a desirable region of the complex plane (a conic sector): 
$$ 
{\cal S}_{\overline{r},\underline{\zeta}}:=\left\{ s\in \mathbb{C}: \Re(s)\leq \overline{r}, \ \frac{\Im(s)}{\Re(s)}\leq \frac{\sqrt{1-\underline{\zeta}^2}}{\underline{\zeta}} \right\}.
$$
This objective is not easily cast into an algorithmic form because of the complex dependence of the polynomial's roots on its coefficients. Therefore, we resort to a heuristic approach: minimizing the distance between the coefficients of the closed-loop polynomial 
\begin{align}
p_{cl}(s, \delta) &= s^{2n} + p_{cl,1}(\delta)s^{2n-1 }+ \cdots + p_{cl,2n}(\delta) \nonumber\\ &= a(s, \delta) g(s) + b(s, \delta) f(s), \nonumber
\end{align}
with those of a reference polynomial 
$$
r(s) = s^{2n} + r_1 s^{2n-1} + \cdots + r_{2n} 
$$
chosen in such a way that its roots lie well within the desired region ${\cal S}_{\overline{r},\underline{\zeta}}$.

More precisely, given a list of scenarios $\delta_1, \ldots, \delta_N$, the controller coefficients are obtained from the following optimization problem: 
$$
\min_{f_1,\dots,f_{n},g_1,\dots,g_n} \left[ \sum_{j=1}^{2n} \max_{i=1,\ldots,N} | p_{cl,j}(\delta_i) - r_j | \right].
$$
Since the coefficients $p_{cl,j}(\delta)$ depend linearly on the controller parameters $f_k$ and $g_k$, this is a convex problem and can be efficiently solved. We denote by $f_1^\ast,f_2^\ast,\ldots,f_n^\ast,g_1^\ast,\ldots,g_n^\ast$ its optimal solution (after breaking possible ties in the domain of coefficients by any rule), and by  $p_{cl}^\ast(s,\delta)$ the corresponding uncertain closed-loop polynomial. 
\medskip 
\subsubsection{Equivalent formulation as a linear problem}
The following reformulation of the problem is equivalent but handier: 
\begin{IEEEeqnarray}{l}
	\label{scenario_prog_polyn}
	\min_{f_1,\dots,f_n,\; g_1,\dots,g_n,\; h_1,\dots,h_{2n}} 
	\quad \sum_{j=1}^{2n} h_j \IEEEyesnumber\\[4pt]
	\text{subject to} \nonumber\\[2pt]
	|p_{cl,j}(\delta_i)-r_j| \le h_j, \quad j=1,\dots,2n,\; i=1,\dots,N. \nonumber
\end{IEEEeqnarray}
In fact, \eqref{scenario_prog_polyn} can be easily turned into a linear program  by replacing the constraint  $|p_{cl,j}(\delta_i) - r_j| \leq h_j$ with the two linear inequalities $p_{cl,j}(\delta_i) - r_j \leq h_j$ and $-(p_{cl,j}(\delta_i) - r_j) \leq h_j$. Moreover, the solution to \eqref{scenario_prog_polyn} incorporates both the controller $f_1^\ast,\ldots,f_n^\ast,g_1^\ast,\ldots,g_n^\ast$ (as above, possible ties are broken in the domain of coefficients by any rule) and $h_1^\ast,\ldots,h_n^\ast$, which bound the maximum variability of the coefficients of $p^\ast_{cl}(s,\delta_i)$ as $a(s,\delta_i)$ and $b(s,\delta_i)$ vary across the $N$ scenarios. 
\medskip 
\subsubsection{Baseline and post-design appropriateness} Problem \eqref{scenario_prog_polyn}, rewritten for a generic $m$ in place of $N$, defines a decision map $M_m$ from $\delta_1,\ldots,\delta_m$ to $(f_1^\ast,\ldots,f_n^\ast,g_1^\ast,\ldots, g_n^\ast,h_1^\ast,\ldots, h_{2n}^\ast)$. 

We say that a given  $(f_1,\ldots,f_n,g_1,\ldots, g_n,h_1,\ldots, h_{2n})$ is baseline appropriate for a $\delta$ if 
$$
|p_{cl,j}(\delta)-r_j|\leq h_j \mbox{ for all } j=1\,\ldots,2n, 
$$ 
and maps $M_m$ are easily seen to be consistent. 

While baseline appropriateness serves as a tool to shift the closed-loop poles toward a location well inside the desired conic sector ${\cal S}_{\overline{r},\underline{\zeta}}$, its satisfaction does not imply that the closed-loop system poles actually lie within ${\cal S}_{\overline{r},\underline{\zeta}}$. Then, in  post-design, $(f_1^\ast,\ldots,f_n^\ast,g_1^\ast,\ldots, g_n^\ast,h_1^\ast,\ldots, h_{2n}^\ast)$ is deemed \emph{post-design} appropriate if the roots of $p_{cl}^\ast(s,\delta_i)$ indeed belong to the conic sector ${\cal S}_{\overline{r},\underline{\zeta}}$, and guarantees are derived on the probability that the roots of $p_{cl}^\ast(s,\delta)$ belong to this region using Theorem \ref{th:upper_bound}. 

\medskip 
\subsubsection{Numerical results}
The pole-placement methodology is applied to the stabilization of an inverted pendulum using a linearization procedure. 

We consider a classical pendulum system (see Figure \ref{fig:pendulum}) governed by the differential equation
$$
M \ell\ddot{\theta}(t) = - \alpha \dot{\theta}(t) - Mg  \sin(\theta(t)) +  u(t),
$$
where $\theta$ is the angular displacement measured counterclockwise from the downward vertical, $u$ is a tangential force (control variable), $g$ is the gravitational acceleration ($9.8 [m/s^2]$), $M$ the pendulum mass, $\ell$ is the rod length, and $\alpha$ is a viscous friction coefficient. The parameters $M$, $\ell$ and $\alpha$ are uncertain, and are collected in $\delta=[M,\ell,\alpha]^\top \in \mathbb{R}^3$.\footnote{For completeness, we specify the distribution of the uncertain parameters used in the simulation. $M$ is uniformly distributed between $9[Kg]$ and $10[Kg]$; $\ell$ is independent of $M$ and uniformly distributed between $0.9[m]$ and $1[m]$; the parameter $\alpha$ depends on $M$ and its conditional distribution (expressed in $[Kg \ m/s^2])$ is uniform between the values $M$ and $1.1M$. We remark that the theory developed in this paper certifies appropriateness without using this information, which is provided here solely for reproducibility purposes.}
\begin{figure}[h!]
\begin{center}
\begin{tikzpicture}
    \coordinate (O) at (0,0); 
    \coordinate (M) at (2,-2.5); 
    \coordinate (V) at (0,-3); 

    \draw[dashed] (O) -- (V);

    \draw[thick] (O) -- (M) node[midway, left] {$\ell$};

    \filldraw[black] (O) circle (2pt);

    \filldraw[gray] (M) circle (5pt) node[right=5pt] {$M$};

    \draw[->, thick] (M) --++ (0,-1.2) node[midway, right] {$Mg$};

    \draw[->, thick, red] (M) --++ (0.8,0.8) node[midway, above] {$u$};

    \draw[thick] (0,-0.8) arc[start angle=-90, end angle=-52, radius=0.8];
    \node at (0.15,-0.45) {$\theta$};

\end{tikzpicture}
\end{center}
\caption{A classic pendulum}
\label{fig:pendulum}
\end{figure}
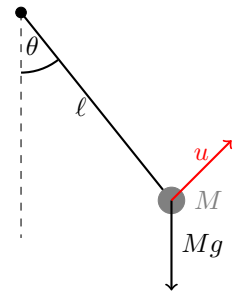

By linearizing the differential equation of the pendulum around  the upright equilibrium ($\theta = \pi)$, we find the transfer function between the control action $u$ and the angle $\theta$ to be:
$$
\frac{\Theta(s)}{U(s)}=\frac{1}{ s^2 + \frac{\alpha}{M\ell}s - \frac{g}{\ell}}.
$$
We desire closed-loop system poles with real part no greater than $\overline{r}=-0.7$ and damping coefficient no less than $\underline{\zeta}=0.5$. The procedure in Section \ref{sec:general_robust_poles} was applied with reference polynomial $r(s)=s^4+8s^3+32s^2+48s+ 36$ (which has roots in  $-1\pm j$ and $-3\pm 3 j$). The confidence parameter was set to value $\beta=10^{-5}$ and $N=2000$ independent scenarios were drawn from the uncertain parameter region. The optimization program \eqref{scenario_prog_polyn} was solved in MATLAB$^\copyright$ using the \texttt{linprog} function, which yielded the controller 
$$
C^\ast(s)=\frac{724 s + 3536}{s^2 + 6.889 s + 34.72}.
$$
The closed-loop polynomials $p_{cl}^\ast(s,\delta_i)$ corresponding to the $N=2000$ scenarios were  in the range  
\begin{equation}
s^4  +  (8\pm0.1)s^3 +  (32\pm0.5) s^2  +  (48\pm 8)s  +  (36\pm22.4),
\label{closed_loop_polyn_solution}    
\end{equation}
where the intervals of variability of the coefficients were obtained from the optimal values $h_j^\ast$. 
  
An examination of the Lagrange multipliers revealed that eight constraints were active at the optimum, and the scenarios associated with these constraints formed the unique baseline support list; hence, ${s}^{b,\ast}_N=8$.\footnote{In convex optimization problem as \eqref{scenario_prog_polyn}, any support list is always formed by scenarios corresponding to active constraints; in the instance at hand none of these constraints could be removed without altering the solution.}  A certificate (valid with confidence $1-10^{-5}$) on the {\em baseline risk} was obtained by applying \eqref{appr1_upper_bound_with_eps}: the closed-loop polynomial for a newly sampled $\delta$ falls outside the computed range with probability at most 1.64\%. Ultimately, our goal was to certify {\em post-design appropriateness}. To this end, we needed to compute the post-design complexity $s^{+,\ast}_N$. By evaluating how many of the $N=2000$ scenarios, excluding those corresponding to active constraints, led to closed-loop system poles not all contained in ${\cal S}_{-0.7,0.5}$, and incrementing ${s}^{b,\ast}_N=8$ with this number, we obtained $s^{+,\ast}_N = 199$ (see Figure~\ref{fig:swan_poles} for a plot of the closed-loop system poles for the $N=2000$ scenarios considered). 
\begin{figure}
    \centering
\includegraphics[width=0.85\linewidth]{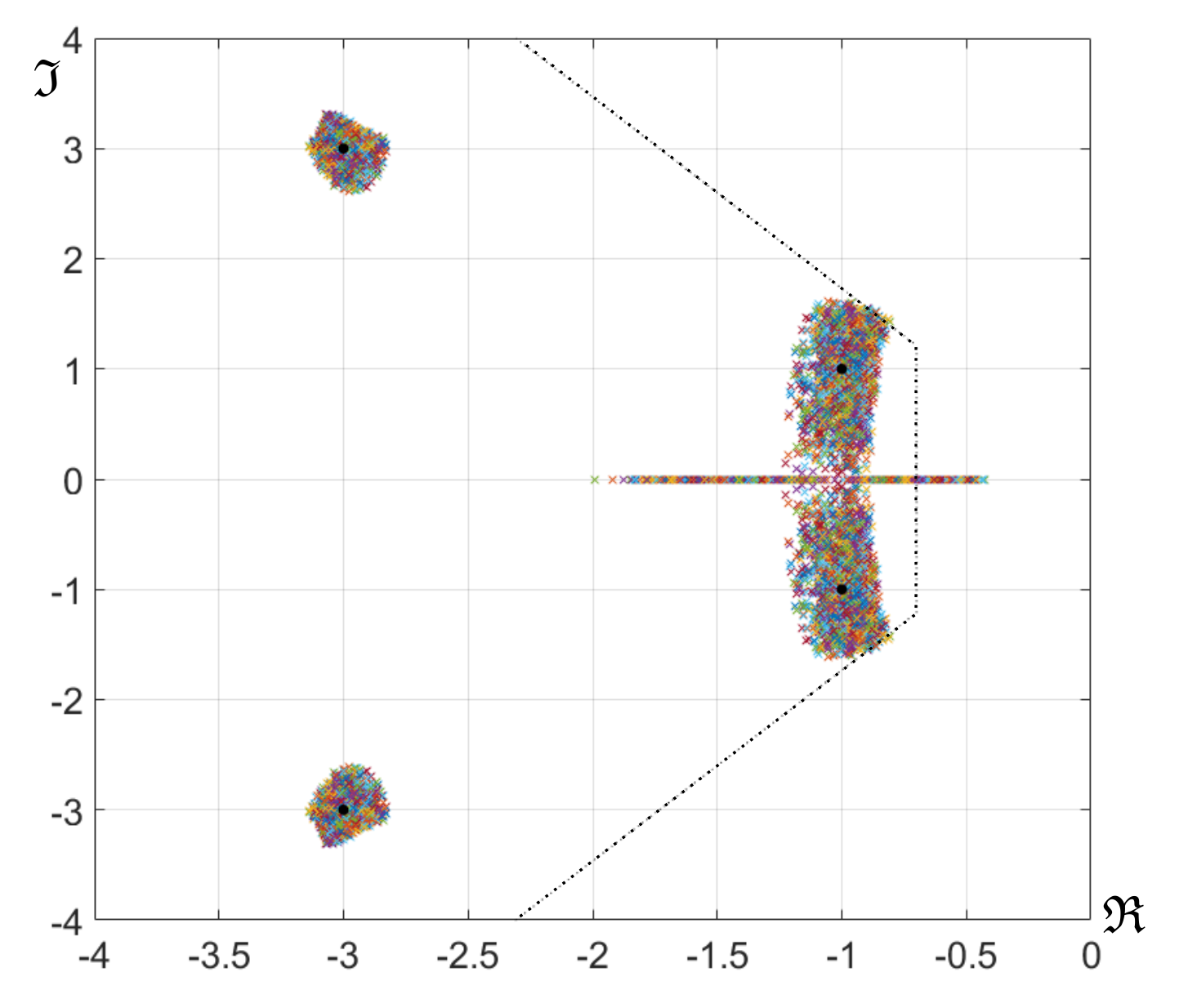}
\caption{Locations in the complex plane of the closed-loop system poles corresponding to the $N=2000$ scenarios in problem \eqref{scenario_prog_polyn}. The dotted line is the border of the desired conic sector ${\cal S}_{-0.7,0.5}$. The black dots in $-1\pm j$ and $-3\pm 3 j$ are the poles of the reference polynomial.}
    \label{fig:swan_poles}
\end{figure} 
Applying Theorem \ref{th:upper_bound}, again with $\beta=10^{-5}$, the {\em post-design risk} was found to be bounded by $\overline{\epsilon}(193) = 13.91\%$. 

In the context of  \eqref{scenario_prog_polyn}, the non-degeneracy condition in Assumption \ref{cond:non-degeneracy} is a relatively mild condition; it is equivalent to requiring that the scenarios corresponding to  active constraints together form a support list with probability 1. In this case,  Theorem \ref{th:nondegerate} allows one to certify (with confidence $1-2\cdot  10^{-5}$) that the post-design risk is between $5.07\%$ ($\underline{\epsilon}(s^{+,\ast}_N)-\epsilon({s}^{b,\ast}_N)=6.71\%-1.64\%$) and $13.56\%$   ($\overline{\epsilon}(s^{+,\ast}_N)$). 

Interestingly, the post-design evaluation can be performed for alternative regions, which also provides a way to assess how much the performance can be further guaranteed  by slightly relaxing the initial requirements. In this example, we repeated the evaluation described above reducing the requirement on the real part to $\overline{r} = -0.6, -0.5, -0.4$ while keeping $\underline{\zeta}$ at its original value $0.5$. For $\overline{r} = -0.6$, we obtained $s^{+,\ast}_N = 130$, resulting in a post-design risk between $2.32\%$ and $9.85\%$; for $\overline{r} = -0.5$, $s^{+,\ast}_N$ was $46$ with a post-design risk between $0$ and $4.66\%$; and for $\overline{r} = -0.4$, $s^{+,\ast}_N$ was $14$ with a post-design risk between $0$ and $2.17\%$. Note, however, that shifting $\overline{r}$ from $-0.7$ to $-0.4$ corresponds to an approximate $75\%$ increase in the settling time.

\section{Using post-design certifications to evaluate cost distributions} \label{sec:cost-dist}

In this section, we show how post-design risk results can be used to obtain a distributional evaluation of the scenario solution $z^\ast_N$ when performance is expressed in terms of a cost. 

Suppose that a cost function $f(z,\delta) : \calZ \times \Delta \to \R$ measures the performance of a decision $z$ when scenario $\delta$ occurs. Then, a comprehensive assessment of the quality of $z$ is provided by the cumulative distribution function (CDF) of the random variable $f(z,\delta)$: 
$$
F_z(\ell) := \mathbb{P}\{ f(z,\delta) \le \ell \}. 
$$
In the context of scenario-based decision making, one is interested in evaluating $F_{z^\ast_N}(\ell)$, the CDF associated with the scenario solution $z^\ast_N$, constructed according to a given decision scheme and baseline appropriateness criterion.\footnote{Sometimes, one takes $f(z,\delta) \le \bar{\ell}$ as baseline appropriateness, where $\bar{\ell}$ is a cost threshold relevant for design purposes.} To this end, consider a grid $\ell_1 < \ell_2 < \cdots < \ell_h$ covering a relevant interval of $\R$ over which $f(z^\ast_N,\delta)$ is expected to vary, and introduce a family of post-design appropriateness criteria
$$
\calZ''_{\delta,\ell_j} := \{ z : f(z,\delta) \le \ell_j \}, \qquad j=1,\ldots,h. 
$$
By definition of post-design risk, we have 
\begin{align}
	R''_{\ell_j}(z) := &  \; \mathbb{P}\{ \delta\in\Delta : z \notin \calZ''_{\delta,\ell_j} \} \nonumber \\
	= & \; \mathbb{P}\{ f(z,\delta) > \ell_j \} 
	= 1 - F_z(\ell_j). \label{eq:R''=1-F} 
\end{align}
Applying Theorem~\ref{th:upper_bound} for each $\ell_j$ yields that 
$R''_{\ell_j}(z^\ast_N) \le \epsilon(s^{+,\ast}_{N,j})$ with confidence $1-\beta$, where $s^{+,\ast}_{N,j}$ denotes the post-design complexity associated with $\calZ''_{\delta,\ell_j}$. Combining this result with~\eqref{eq:R''=1-F}, and using a union bound over the grid points $\ell_j$'s, gives that the inequalities 
$$
F_{z^\ast_N}(\ell_j) \;\ge\; 1 - \epsilon(s^{+,\ast}_{N,j}),
\quad j=1,\ldots,h,
$$
hold \emph{simultaneously} with confidence $1 - h\beta$.

Observe now that any lower bound on $F_{z^\ast_N}(\ell_j)$ extends to all $\ell \ge \ell_j$ because CDFs are monotonically increasing. Therefore, defining
$$
\underline{F_\epsilon}(\ell)
=
\begin{cases}
	1 - \epsilon(s^{+,\ast}_{N,h}), 
	& \ell \ge \ell_h,\\
	1 - \epsilon(s^{+,\ast}_{N,j}), 
	& \ell_j \le \ell < \ell_{j+1},\quad j=1,\ldots,h-1,\\
	0, 
	& \ell < \ell_1,
\end{cases}
$$
we obtain the bound
$$
F_{z^\ast_N}(\ell)
\;\ge\;
\underline{F_\epsilon}(\ell), 
\; \forall \ell\in\R, \quad \text{with confidence } 1 - h\beta.
$$
Thus $\underline{F_\epsilon}(\ell)$ is, with confidence $1-h\beta$, a valid lower bound for the \emph{entire} CDF of $f(z^\ast_N,\delta)$.\footnote{Note that $\underline{F_\epsilon}$ is also monotonically increasing. Indeed, since $\ell_j < \ell_{j+1}$, the implication $f(z^\ast_N,\delta) > \ell_{j+1} \Rightarrow f(z^\ast_N,\delta) > \ell_j$ holds, from which it follows that $s^{+,\ast}_{N,j} \ge s^{+,\ast}_{N,j+1}$. The monotonicity of $\underline{F_\epsilon}(\ell)$ then follows from the fact that $\epsilon(\cdot)$ is an increasing function, as previously noted in Remark~\ref{rmk:computings*}.}

\medskip
The above results can be further strengthened under Assumption~\ref{cond:non-degeneracy}. In this case, repeating the argument above, with the obvious modifications, and resorting to Theorems~\ref{th:nondegerate,nested} and~\ref{th:nondegerate} gives that
$$
1 - \utilde{\epsilon}(s^{b,\ast}_N,s^{+,\ast}_{N,j})
\;\ge\;
F_{z^\ast_N}(\ell_j)
\;\ge\;
1 - \overline{\epsilon}(s^{+,\ast}_{N,j}),
\quad j=1,\ldots,h,
$$
hold simultaneously with confidence $1-(h+r)\beta$, where
$$
\utilde{\epsilon}(s^{b,\ast}_N,s^{+,\ast}_{N,j}) =
\begin{cases}
	\underline{\epsilon}(s^{+,\ast}_{N,j}), 
	& \text{if } \calZ''_{\delta,\ell_j} \subseteq \calZ'_\delta \mbox{ for all } \delta,\\
	\underline{\epsilon}(s^{+,\ast}_{N,j}) - \epsilon(s^{b,\ast}_N), 
	& \text{otherwise},
\end{cases}
$$
and $r$ counts the number of thresholds $\ell_j$ for which the non-nested case occurs.\footnote{When baseline appropriateness is expressed as $f(z,\delta)\le\bar{\ell}$, the nested case holds for $\ell_j\le\bar{\ell}$ and fails when $\ell_j>\bar{\ell}$.}

Proceeding as before, and using the fact that an upper bound for $F_{z^\ast_N}(\ell_j)$ extends to all $\ell\le \ell_j$, we obtain upper and lower bounds on the entire CDF of $f(z^\ast_N,\delta)$. Specifically, define $\underline{F_{\eps}}$ as before but with $\overline{\epsilon}(\cdot)$ in place of $\epsilon(\cdot)$, and
$$
\overline{F_{\eps}}(\ell)
=
\begin{cases}
	1, & \ell \ge \ell_h,\\
	1 - \utilde{\epsilon}(s^{b,\ast}_N,s^{+,\ast}_{N,j}), 
	& \ell_{j-1} < \ell \le \ell_j,\quad j=2,\ldots,h,\\
	1 - \utilde{\epsilon}(s^{b,\ast}_N,s^{+,\ast}_{N,1}), 
	& \ell \le \ell_1.
\end{cases}
$$
Then,
\begin{align*}
& \overline{F_{\eps}}(\ell) \;\ge\; F_{z^\ast_N}(\ell) \;\ge\; \underline{F_{\eps}}(\ell), \quad \forall \ell\in\R, \\
& \text{with confidence } 1-(h+r)\beta.
\end{align*}
Hence, the entire CDF of $f(z^\ast_N,\delta)$ is sandwiched between two data-driven functions, resulting in a complete characterization of the performance of $z^\ast_N$ in the uncertain environment. 

\subsection{An example: open-loop input design}

The previously discussed results are now illustrated by means of a simple open-loop input design problem inspired by an example in~\cite{Cam_Gar_Ram:18}.

Consider the discrete-time uncertain linear system
\begin{equation}
	\label{random_sys}
	\eta(t+1) = A \eta(t) + B u(t), \quad\quad \eta(0) = \eta_0,
\end{equation}
where $\eta(t) \in \R^2$ is the state and $u(t) \in \R$ is the control input, which takes value in $\calU := [-10,10]$ due to actuation constraints. The initial state and the matrix $B$ are fixed and given by $\eta_0 = \bmat 1 & 1 \emat^\top$ and $B = \bmat 0 & 0.25 \emat^\top$, respectively. Instead, the state matrix $A \in \R^{2\times 2}$ is uncertain, with entries drawn independently from four Gaussian distributions with means
\begin{displaymath}
	\bar{A} =
	\bmat 0.8 & -1 \\ 0 & -0.9 \emat,
\end{displaymath}
and standard deviations $0.05(1+2v_i)$, $i=1,2,3,4$, where the $v_i$'s are independent draws of a Bernoulli random variable taking value $1$ with probability $0.03$ and value $0$ otherwise. These distributions are specified solely to enable reproducibility and they play no role in the design of the solution or in its quality assessment. The only information used for design consists of $N=1000$ realizations of $A$, say $A_1,\ldots,A_{1000}$, which are identified with the scenarios $\delta_1,\ldots,\delta_{1000}$. 

Informally, the control objective is to select the input sequence $u(0), \cdots, u(T \! - \! 1)$ so as to drive the system state at time $T=5$ as close as possible to the origin, where the distance is measured according to
the maximum norm $\left\| \eta(T) \right\|_\infty := \max(|\eta_1(T)|,|\eta_2(T)|)$. Observing that $\eta(T) = A^T \eta_0 + \sum_{t=0}^{T-1} A^{T-1-t} B u(t)$, it is immediately seen that $\left\| \eta(T) \right\|_\infty = \left\| A^T \eta_0 + R \bu \right\|_\infty$, where $R = \bmat B & A B & \cdots & {A}^{T-1} B \emat$
and $\bu = \allowbreak  \bmat u(T-1) & u(T-2) & \cdots & u(0) \emat^\top$. 

In formal terms, the input design problem is formulated as follows: 
\begin{eqnarray} \label{eq:sc-opt-relax-example}
	\min_{\stackrel{h \geq 0, \bu \in \calU^T,}{\xi_i \geq 0}} & & h +
	\rho \sum_{i=1}^N \xi_i \\
	\textrm{\rm subject to:} & & \left\| A_i^T \eta_0 + R_i \bu
	\right\|_\infty - h \leq \xi_i, \quad i = 1,\ldots,N, \nonumber
\end{eqnarray}
which is a scenario program with constraint relaxation, \cite{garatti2024noncvx}. When $\rho$ is very large, this problem reduces to 
\begin{eqnarray} 
	\min_{h \geq 0, \bu \in \calU^T} & & h \nonumber \\
	\textrm{\rm subject to:} & & \left\| A_i^T \eta_0 + R_i \bu
	\right\|_\infty \leq h, \quad i = 1,\ldots,N, \nonumber
\end{eqnarray}
which is a robust problem whose optimal value $h^\ast$ represents the maximum deviation of the final states, across the available scenarios, from the origin. Allowing for smaller values of $\rho$ results instead in a program where violation of the constraints $\left\| A_i^T \eta_0 + R_i \bu \right\|_\infty \leq h$ is tolerated, but this is discouraged by the penalization term $\rho \sum_{i=1}^N \xi_i$. This is expected to reduce the influence of particularly adverse scenarios, so yielding a better threshold $h^\ast$ and $\eta(T)$ closer to the origin for all the other scenarios. 

For each $\rho$, replacing $N$ in \eqref{eq:sc-opt-relax-example} with $m=0,1,\ldots$ defines a family of decision maps $M_m$ that are consistent with respect to the appropriateness criterion $\left\| A_i^T \eta_0 + R_i \bu \right\|_\infty - h \leq 0$, taken here as \emph{baseline appropriateness}. See \cite{garatti2024noncvx} for details. 

Problem \eqref{eq:sc-opt-relax-example} was solved for $\rho = 1$, resulting in $\bu^\ast_N = [-0.27 \ -0.58 \ -1.34 \ 0.45 \ 5.85]^\top$, $h^\ast_N = 0.15$, and this solution was found to coincide with the robust solution as no constraints were violated (we had $\xi_i = 0 $ for all $i$). We then solved \eqref{eq:sc-opt-relax-example} with $\rho = 0.05$, resulting in $\bu^\ast_N = [1.19 \ -0.38 \ -4.76 \ 0.11 \ 7.38]^\top$, $h^\ast_N = 0.07$, a less conservative solution which however violates $18$ constraints (we had $\xi_i > 0$ in $18$ cases). 

To gain further insight into the performance of these two solutions, we evaluated the distribution of $f(z^\ast_N,\delta) = \left\| A^T \eta_0 + R \bu^\ast_N \right\|_\infty - h^\ast_N$ following the methodology of Section~\ref{sec:cost-dist}. To this end, $\ell_1,\ldots,\ell_h$ were taken as a grid of $h = 100$ uniformly spaced values in the interval $[-0.15,0]$, $\beta$ was set to $10^{-7}$, and $\overline{F_{\eps}}(\ell)$ and $\underline{F_{\eps}}(\ell)$ were computed for both solutions. With high confidence $1-10^{-5}$ (this is $1-h \cdot 10^{-7}$), these functions provide an upper and a lower bound for the entire CDF of $\left\| A^T \eta_0 + R \bu^\ast_N \right\|_\infty - h^\ast_N$.\footnote{Provably, Assumption \ref{cond:non-degeneracy} is satisfied in the present case because $\delta$ admits a density. Moreover, since $\ell_j \leq 0$ for all $j$'s, the nested case applies throughout.} 

A more meaningful interpretation of the results is obtained by observing that shifting $f(z^\ast_N,\delta)$ by $h^\ast_N$ yields $\left\| A^T \eta_0 + R \bu^\ast_N \right\|_\infty$, which is the maximum norm  $\| \eta(T) \|_\infty$, of the final state. Therefore, $\overline{F_{\eps}}(\ell-h^\ast_N)$ and $\underline{F_{\eps}}(\ell-h^\ast_N)$ serve as an upper and a lower bound on the entire CDF of $\| \eta(T) \|_\infty$.

Functions $\overline{F_{\eps}}(\ell-h^\ast_N)$ and $\underline{F_{\eps}}(\ell-h^\ast_N)$ are shown in Figure~\ref{fig:f_dist_combined} for $\rho = 1$  (top panel) and $\rho = 0.05$ (bottom panel).
\begin{figure}[ht]
	\centering
	\includegraphics[width=0.85\columnwidth]{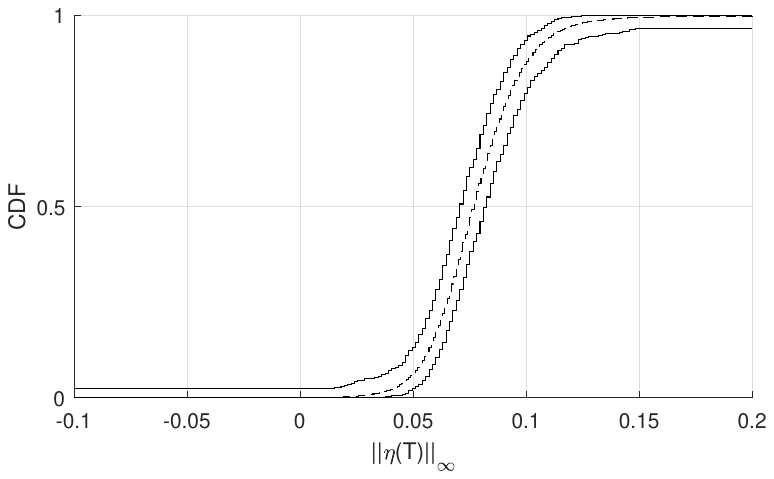}
	\\[1em] 
	\includegraphics[width=0.85\columnwidth]{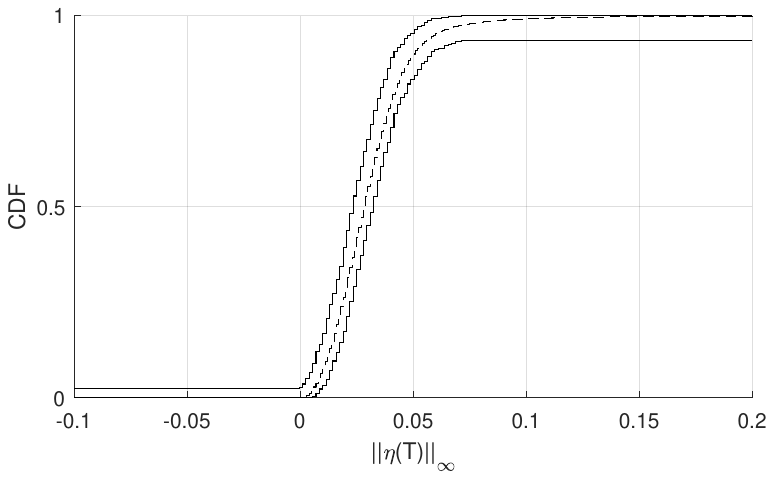}
	\caption{$\overline{F_{\eps}}(\ell-h^\ast_N)$ and $\underline{F_{\eps}}(\ell-h^\ast_N)$ (solid lines) vs. actual cumulative distribution function of $\| \eta(T) \|_\infty$ (dashed lines) for $\rho = 1$ (top) and $\rho = 0.05$ (bottom).}
	\label{fig:f_dist_combined}
\end{figure}
These plots offer several useful insights. In particular, the conservatism of the solution obtained with $\rho = 1$ is evident: in an effort to safeguard against the worst, the adopted control actions produce values of $\| \eta(T) \|_\infty$ that are predominantly distributed above $0.05$. In contrast, the solution corresponding to $\rho = 0.05$ yields values of $\| \eta(T)\|_\infty$ that are concentrated between $0$ and $0.05$, at the cost of a potentially heavier tail in the distribution.

Since this example was conducted entirely {\em in silico}, we also validated the analysis by generating $100000$ new scenarios $A_i$ and computing Monte Carlo the actual CDF of $\left\| \eta(T) \right\|_\infty$ for both solutions. The resulting CDFs, shown as dashed lines in the plots, indeed lie between $\overline{F_{\eps}}(\ell-h^\ast_N)$ and $\underline{F_{\eps}}(\ell-h^\ast_N)$ in both cases, as 
predicted with confidence $1-10^{-5}$ by the theory. 	

\section{Conclusions} 
\label{sec:conclusions}
In this paper, we have introduced a novel framework to certify data-driven decisions in post-design. We have distinguished two levels of appropriateness: \emph{baseline appropriateness}, which guides the design process, and \emph{post-design appropriateness}, which serves as a criterion for \emph{a posteriori} evaluation. These notions correspond to two different kinds of risk:  the first, termed \emph{baseline risk}, is the traditional subject of the scenario approach, while the second, termed \emph{post-design risk}, was not previously covered by the scenario theory. In alternative approaches, the post-design risk is typically evaluated using test datasets. We have shown that effective bounds on the post-design risk can be derived via an extended notion of complexity, computable directly from the training set without the need for any additional test data. Moreover, we have demonstrated that, in relevant settings, both upper and lower bound on the post-design risk can be obtained,  yielding tight evaluations. 

The theory developed in this paper is relevant when additional, or more stringent, requirements need to be verified in post-design. Moreover, it offers a viable approach to using heuristics in problems that are impractical to handle according to their original formulation, while still providing rigorous guarantees on the original requirements. These two paradigms have been illustrated by means of two examples: the design of lateral motion control, and pole-placement for stabilizing an inverted pendulum. We have also illustrated a method that exploits the notion of post-design appropriateness to infer comprehensive distributional knowledge of relevant performance indexes from the available dataset. 

\bibliographystyle{IEEEtran}
\bibliography{post-design-CCG.bib}

\begin{IEEEbiography}[{\includegraphics[width=1in,height=1.25in,clip,keepaspectratio]{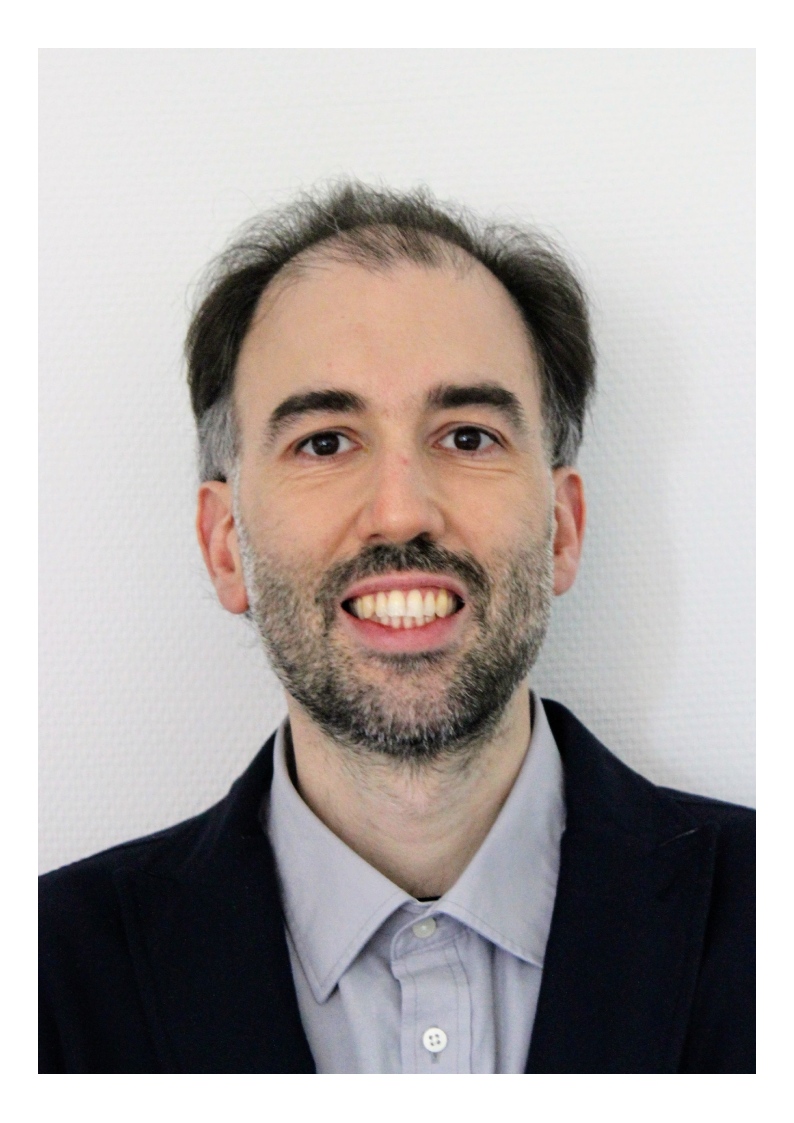}}]{Algo Car\`e}(M'19) received the Ph.D. degree in informatics and automation engineering in 2013 from the University of Brescia, Italy, where he is currently an Associate Professor with the Department of Information Engineering. After his Ph.D., he spent two years at the University of Melbourne, VIC, Australia, as a Research Fellow in system identification with the Department of Electrical and Electronic Engineering. In 2016, he was a recipient of a two-year ERCIM Fellowship that he spent at the Institute for Computer Science and Control (SZTAKI), Hungarian Academy of Sciences (MTA), Budapest, Hungary, and at the Multiscale Dynamics Group, National Research Institute for Mathematics and Computer Science (CWI), Amsterdam, The Netherlands.  He was one of the recipients of the 2025 IEEE CSS {\em Roberto Tempo} CDC Best Paper Award and he received the triennial Stochastic Programming Student Paper Prize of the Stochastic Programming Society for the period 2013--2016.
He is an Associate Editor for {\em Automatica} and for the {\em International Journal of Adaptive Control and Signal Processing}. He is a member of the {\em EUCA Conference Editorial Board}, of the {\em IFAC Technical Committee on Modeling, Identification and Signal Processing}, and of the {\em IEEE Technical Committee on Systems Identification and Adaptive Control}. His current research interests include: {\em data-driven decision methods}, {\em system identification}, and {\em learning theory}.
\end{IEEEbiography}
\vspace{-1cm}
\begin{IEEEbiography}[{\includegraphics[width=1in,height=1.25in,clip,keepaspectratio]{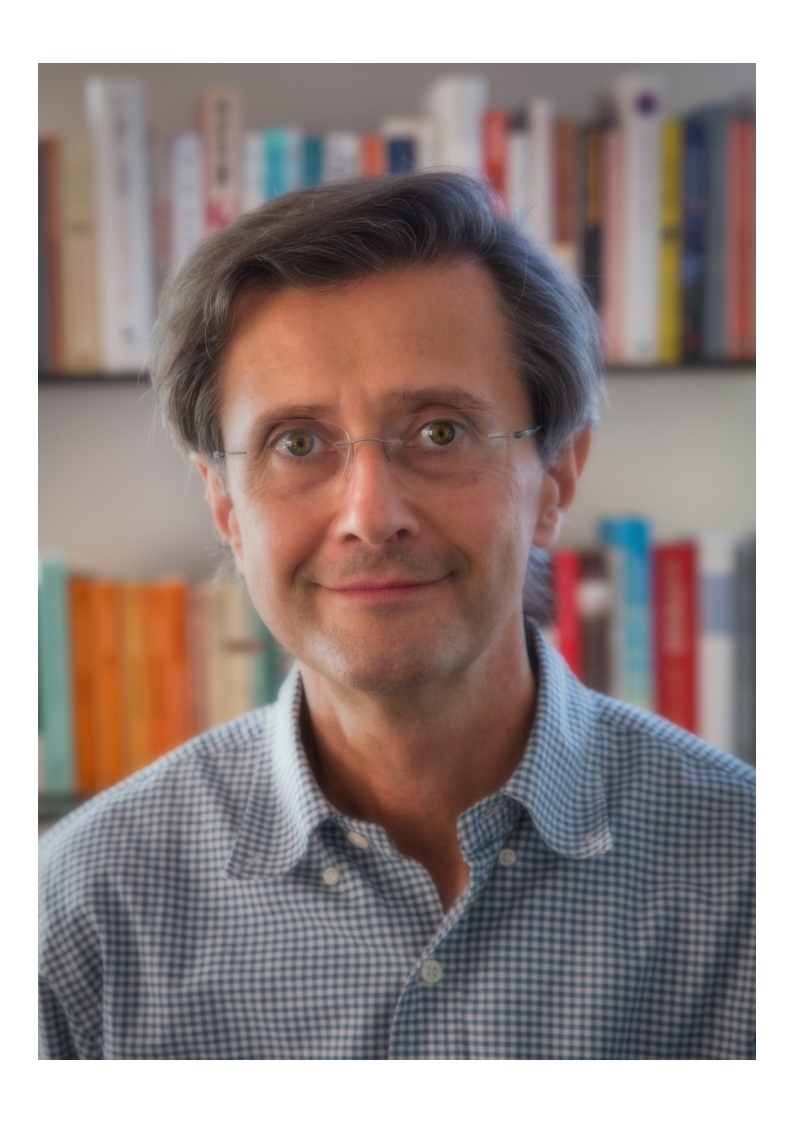}}]{Marco Claudio Campi} (F’12)  is Professor of Automatic Control at the {\em University of Brescia}, Italy. He has held visiting and teaching appointments at the {\em Australian National University}, Canberra, Australia; the {\em University of Illinois at Urbana-Champaign}, USA;  the {\em Centre for Artificial Intelligence and Robotics}, Bangalore, India; the {\em University of Melbourne}, Australia; {\em Kyoto University}, Japan; {\em Texas A{\&}M University}, USA; and the {\em NASA Langley Research Center}, Hampton, Virginia. USA. Marco Campi was the chair of the {\em Technical Committee IFAC on Modeling, Identification and Signal Processing (MISP)}, and is now leading the Italian section of ERNSI - European Research Network on System Identification. He has served in various capacities on the editorial boards of {\em Automatica}, {\em Systems and Control Letters} and the {\em European Journal of Control}. In 2008, he received the IEEE CSS George S. Axelby outstanding paper award for the article {\em The Scenario Approach to Robust Control Design}. He has delivered plenary and semi-plenary addresses at major conferences, including CDC, SYSIDand , MTNS, and has served multiple terms as a {\em Distinguished Lecturer of the Control Systems Society}. Marco Campi is a Fellow of IEEE and a Fellow of IFAC. His research interests include: {\em data-driven decision making}, {\em inductive learning}, {\em system identification}, and {\em stochastic systems}.
\end{IEEEbiography}
\vspace{-1cm}
\begin{IEEEbiography}[{\includegraphics[width=1in,height=1.25in,clip,keepaspectratio]{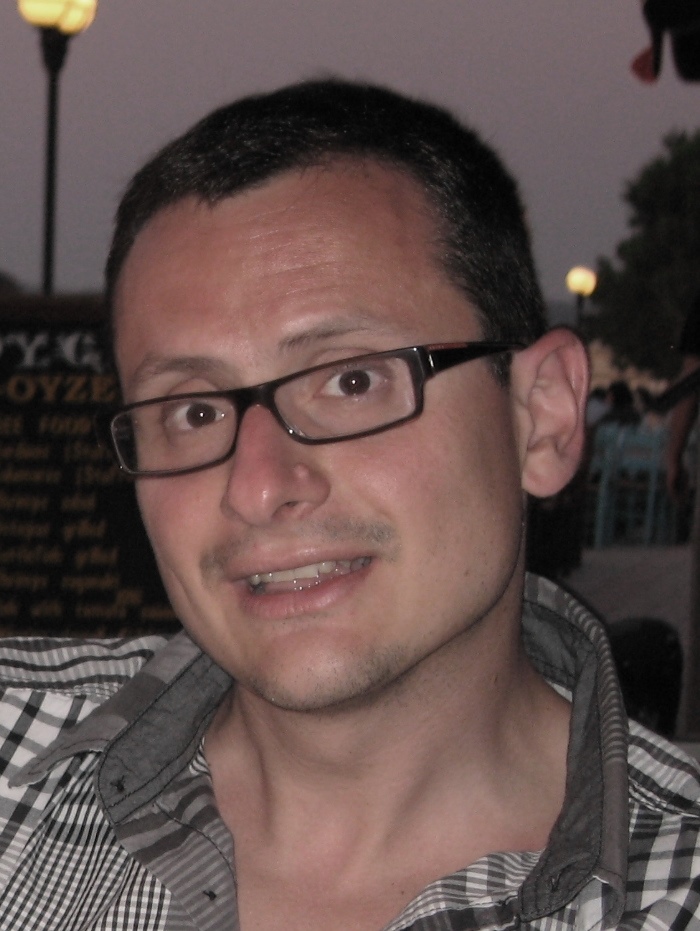}}]{Simone Garatti}(M'13) is Associate Professor at the {\em Dipartimento di Elettronica ed Informazione of the Politecnico di Milano}, Italy. He received the Laurea degree cum laude and the Ph.D. cum laude in Information Technology Engineering in 2000 and 2004, respectively, both from the Politecnico di Milano. He held visiting positions at the {\em Lund University of Technology}, at the {\em University of California San Diego}, at the {\em Massachusetts Institute of Technology}, and at the {\em University of Oxford}. Simone Garatti is currently member of the {\em IEEE-CSS Conference Editorial Board} and Associate Editor of the {International Journal of Adaptive Control and Signal Processing} and of the {\em Machine Learning and Knowledge Extraction journal}. In 2024, he served as Tutorial Chair in the organizing committee of the 6th Learning for {\em Dynamics and Control Conference} (L4DC), while he was a member of the EUCA {\em Conference Editorial Board} from 2013 to 2019. With his co-authors, Simone Garatti has pioneered the theory of the scenario approach for which he was keynote speaker at the IEEE 3rd Conference on Norbert Wiener in the 21st Century in 2021, and semi-plenary speaker at the 2022 European Conference on Stochastic Optimization and Computational Management Science (ECSO-CMS). His current research interests include: {\em data-driven optimization and decision-making}, {\em stochastic optimization}, {\em system identification}, {\em uncertainty quantification}, and {\em statistical learning theory}.
\end{IEEEbiography}

\end{document}